% compile with pdflatex

\documentclass[envcountsame,a4paper,12pt]{llncs}
\usepackage[margin=2.3cm,marginpar=1.6cm]{geometry}
\pagestyle{plain}

\def\Vhrulefill{\leavevmode\leaders\hrule height 0.7ex depth \dimexpr0.4pt-0.7ex\hfill\kern0pt}

\usepackage{lineno}
%\linenumbers

\usepackage{makeidx} 

\usepackage{microtype,scalefnt,hyperref}

\usepackage{relsize}%

\usepackage[xspace]{ellipsis}
\usepackage{listings}
\usepackage{tikz}
\usepackage{proof}
\usepackage{enumerate}
\usepackage{graphicx}
\usepackage{scalerel}
\usepackage{booktabs}
\usepackage{array}
\usepackage{arydshln}
\usepackage{xcolor}
\usepackage{stmaryrd}

\usepackage{comment}

\usepackage{amsmath,amssymb}

\usepackage{thmtools,thm-restate}
\usepackage{centernot}
\usepackage{mathtools}
\usepackage{stmaryrd}

\bibliographystyle{plainurl}% the recommended bibstyle

%%%%%%%%%%%%%%%%%%%%%%%%%
% BEGIN FIX FOR LINENUMBERS
\newcommand*\patchAmsMathEnvironmentForLineno[1]{%
	\expandafter\let\csname old#1\expandafter\endcsname\csname #1\endcsname
	\expandafter\let\csname oldend#1\expandafter\endcsname\csname end#1\endcsname
	\renewenvironment{#1}%
	{\linenomath\csname old#1\endcsname}%
	{\csname oldend#1\endcsname\endlinenomath}}% 
\newcommand*\patchBothAmsMathEnvironmentsForLineno[1]{%
	\patchAmsMathEnvironmentForLineno{#1}%
	\patchAmsMathEnvironmentForLineno{#1*}}%
\AtBeginDocument{%
	\patchBothAmsMathEnvironmentsForLineno{equation}%
	\patchBothAmsMathEnvironmentsForLineno{align}%
	\patchBothAmsMathEnvironmentsForLineno{flalign}%
	\patchBothAmsMathEnvironmentsForLineno{alignat}%
	\patchBothAmsMathEnvironmentsForLineno{gather}%
	\patchBothAmsMathEnvironmentsForLineno{multline}%
}

\newcommand{\qedd}{\qed}

% END FIX
%%%%%%%%%%%%%%%%%%%%%%%%%%%%%%%%%%%%%%%%%%%%%

\title{Polyteam Semantics
\thanks{This research was supported by the Marsden grant UOA1628, administered by the Royal Society of New Zealand, and the grants 292767 and 308712 of the Academy of Finland. The third author was an international research fellow of Japan Society for the Promotion of Science (Postdoctoral
Fellowships for Research in Japan (Standard)).}}

\author{Miika Hannula\inst{1,2} \and Juha Kontinen\inst{2} \and Jonni~Virtema\inst{2,3,4}}

\institute{
University of Auckland, New Zealand
%\email{miika.hannula@helsinki.fi}
\and
University of Helsinki, Finland,
\email{\{miika.hannula,juha.kontinen\}@helsinki.fi}
\and
Hasselt University, Belgium
\and
Hokkaido University, Japan,
\email{jonni.virtema@let.hokudai.ac.jp}
}

% Fonts
     % N, Z, Q, ...
 % P, NP, NL, ...
\newcommand{\complClFont}[1]{\mathsf{#1}}         % P, NP, NL, ...
           % BF, R_1, M, ...
\newcommand{\logicClFont}[1]{\mathsf{#1}}        % LTL, CTL, ...
         % AX, EG, AU, ...
         % SAT, TAUT, IMP, ...
\newcommand{\mathCommandFont}[1]{\mathrm{#1}}     % max, min, ...

% common operators
\newcommand{\ar}[1]{{\protect\ensuremath{\mathCommandFont{ar}(#1)}}}
\newcommand{\rel}[1]{{\protect\ensuremath{\mathCommandFont{rel}(#1)}}}
\newcommand{\tuple}[1]{\vec{#1}}
\newcommand{\sub}{\subseteq}
\newcommand{\Dom}[1]{{\protect\ensuremath{\mathsf{Dom}(#1)}}}
\newcommand {\indep}[3] {#2 ~\bot_{#1}~ #3}
\newcommand {\indepc}[2] {#1 ~\bot~ #2}

\providecommand{\dfn}{\mathrel{\mathop:}=}
\providecommand{\ddfn}{\mathrel{\mathop{{\mathop:}{\mathop:}}}=}

% complexity classes
\newcommand{\PTIME}{\protect\ensuremath{\complClFont{PTIME}}\xspace}

\newcommand{\calC}{\protect\ensuremath{\mathcal{C}}}

\newcommand{\calL}{\protect\ensuremath{\mathcal{L}}}

\newcommand{\calR}{\protect\ensuremath{\mathcal{R}}}

% logics
\newcommand{\ESO}{\logicClFont{ESO}}

\newcommand{\FO}{\logicClFont{FO}}
\newcommand{\PFO}{\logicClFont{PFO}}

\newcommand{\posgfp}{\logicClFont{PosGFP}}

%atoms
\newcommand{\dep}[1]{=\!\left(#1\right)}
\newcommand{\pdep}[4]{=\!\left(#1,#2/#3,#4\right)}
\newcommand{\pcon}[2]{=\!\left(#1/#2\right)}
\newcommand{\pinc}[2]{#1 \sub #2}
\newcommand{\exc}[2]{#1 \mid #2}
\newcommand{\deps}{\rm dep}
\newcommand{\pdeps}{\rm pdep}
\newcommand{\incs}{\rm inc}
\newcommand{\pincs}{\rm pinc}
\newcommand{\pinds}{\rm pind}
\newcommand{\inds}{\rm ind}
\newcommand{\excs}{\rm exc}
\newcommand{\pexcs}{\rm pexc}

\newcommand{\ci}[3]{#2~\bot_{#1}~#3}
\newcommand{\cixyz}{\ci{\tuple x}{\tuple y}{\tuple z}}

\newcommand{\cvees}{\hspace{-.4mm}\cvee\hspace{-.4mm}}
\newcommand{\cveee}{\hspace{.5mm}\varovee\hspace{.5mm}}

\newcommand{\cvee}{\mbox{\larger[1.5]$\cveee$}}

\renewcommand{\vec}{\overline}

\newcommand{\var}[1]{\textrm{Var}(#1)}
\newcommand{\fr}[1]{\textrm{Fr}(#1)}
\newcommand{\ifr}[2]{\textrm{Fr}_{#1}(#2)}
\newcommand{\emp}[2]{{#1}_{#2\mapsto \emptyset}}
\newcommand{\rr}{\mathCommandFont{r}}

\renewcommand{\vec}{\overline}
\newcommand{\pt}[1]{\vec{#1}}

\newcommand{\A}{\mathfrak{A}}
\newcommand{\Po}{\mathcal{P}}
\newcommand{\N}{\mathbb{N}}

\newenvironment{redtext}{\color{red}}{\ignorespacesafterend}
\newenvironment{bluetext}{\color{blue}}{\ignorespacesafterend}
\newenvironment{purpletext}{\color{purple}}{\ignorespacesafterend}

\newcommand\ScaleExists[1]{\vcenter{\hbox{\scalefont{#1}$\exists$}}}

\DeclareMathOperator*\bigexists{%
	\vphantom\sum
	\mathchoice{\ScaleExists{2}}{\ScaleExists{1.4}}{\ScaleExists{1}}{\ScaleExists{0.75}}}

\newcommand\ScaleForall[1]{\vcenter{\hbox{\scalefont{#1}$\forall$}}}

\DeclareMathOperator*\bigforall{%
	\vphantom\sum
	\mathchoice{\ScaleForall{2}}{\ScaleForall{1.4}}{\ScaleForall{1}}{\ScaleForall{0.75}}}

\begin{document}

\maketitle

\begin{abstract}
Team semantics is the mathematical framework of modern logics of dependence and independence in which formulae are interpreted by sets of assignments (teams) instead of single assignments as in first-order logic. In order to deepen the fruitful interplay  between team semantics and database dependency theory, we define  \emph{Polyteam Semantics} in which formulae are evaluated over a family of teams. We begin by defining a novel polyteam variant of dependence atoms and  give a finite  axiomatisation for the associated implication problem.
We relate polyteam semantics to team semantics and investigate in which cases logics over the former can be simulated by logics over the latter.
We also  characterise the expressive power of poly-dependence logic by properties of polyteams that are downwards closed and definable in existential second-order logic ($\ESO$).  The analogous result is shown to hold for poly-independence logic and all $\ESO$-definable properties.
We also relate poly-inclusion logic to greatest fixed point logic.
\keywords{team semantics, dependence, independence, expressive power, existential second-order logic}
\end{abstract}

\section{Introduction}
Team semantics is the mathematical framework of modern logics of dependence and independence.  The origin of team semantics goes back to  \cite{hodges97} but its development to its current form began with the publication  of the monograph \cite{vaananen07}. In team semantics formulae are interpreted by sets of assignments (teams) instead of single assignments as in first-order logic. The reason for this change is that statements such as  \emph{the value of a variable $x$ depends on the value of  $y$}  do not really make sense for single assignments. Team semantics has interesting  connections with database theory and  database dependencies \cite{DBLP:conf/lpar/Hannula15,DBLP:conf/foiks/HannulaK14,DBLP:conf/cikm/HannulaKL14,KontinenLV13}. In order to facilitate the exchange  between team semantics and   database theory, we introduce a generalisation of team semantics  in which formulae are evaluated over a family of teams. 
We identify a natural notion of poly-dependence that generalises dependence atoms to polyteams and give a finite axiomatisation for its implication problem. We also define polyteam versions of independence, inclusion and exclusion atoms, and characterise the expressive power of logics using these novel atoms.
%%%%%%%%%%%%%%%%%

A team $X$  is a set of assignments with a common finite domain  $x_1,\ldots, x_{n}$ of variables. Such a team  can be  viewed as a database table with $x_1,\ldots, x_{n}$ as its attributes. Dependence logic extends the language of first-order logic with atomic formulae $\dep{\tuple x, y}$ called \emph{dependence atoms} 
 expressing that the value of the variable $y$ is functionally determined by the values of the variables in $\tuple x$. On the other hand, \emph{independence atoms} 
$\tuple y \perp_{\tuple x} \tuple 
z$
\cite{gradel13}  
express that, for any fixed  value of $ \tuple x$, knowing the value of $\tuple z$ does not tell us anything new about the value of $\tuple y$. By viewing a team as a database table, the atoms  $\dep{\tuple x, y}$ and  $\tuple y \perp_{\tuple x} \tuple 
z$ correspond  to the widely studied  functional and embedded multivalued dependencies. Furthermore,   inclusion atoms  $\tuple x \subseteq \tuple y$  and exclusion atoms $ \tuple x | \tuple y$  of \cite{galliani12} inherit their semantics from the corresponding  database dependencies. 

Independence, inclusion, and exclusion atoms  have very interesting properties in the team semantics setting. For example, inclusion atoms give rise to a variant of dependence logic that corresponds to the complexity class PTIME over finite ordered structures \cite{gallhella13} whereas all the other atoms above (and their combinations) give rise to  logics that are equi-expressive with existential second-order logic and the complexity class NP. The complexity theoretic aspects of logics in team semantics have been studied extensively during the past few years  (see \cite{DKV} for a survey).  

A multiset version of team semantics was recently defined in \cite{2015arXiv151009040D}. Multiteam semantics is motivated by the fact that multisets are widely assumed in database theory and occur in applications. 
Multiteam semantics widens the applicability of team semantics for the study of qualitative (e.g., functional dependence) and quantitative (e.g., probabilistic independence) dependencies in a unified framework. Recently multiteam semantics was further generalised to so-called probabilistic team semantics in which information on exact multiplicities are replaced by a (probability) distribution over the assignments \cite{HKMV18,jelia19}.

The aim of this work is similar to that of  \cite{2015arXiv151009040D}, i.e.,  we want to  extend the applicability of team semantics.  In database theory dependencies are often expressed by so-called embedded dependencies. An \emph{embedded dependency} is a sentence of first-order logic with equality of the form
\[
\forall x_1\dots\forall x_n \big(\phi(x_1,\dots,x_n) \rightarrow \exists y_1\dots\exists y_k \psi(x_1,\dots,x_n,y_1,\dots,y_k)\big),
\]
where $\phi$ and $\psi$ are conjunctions of relational atoms $R(x_1,\dots,x_n)$ and equalities $x=y$. In the literature embedded dependencies have been thoroughly classified stemming from real life applications. Examples of well-known subclasses include \emph{full}, \emph{uni-relational}, \emph{1-head}, \emph{tuple-generating}, and \emph{equality-generating}. For example, an embedded dependency is called \emph{tuple-generating} if it is equality free (for further details see, e.g., \cite[Section 3]{Kanellakis90}). The uni-relational dependencies can be studied also in the context of team semantics as generalised dependencies \cite{kuusisto12}. However in many applications, especially in the area of data exchange and data integration, it is essential to be able to express dependencies between different relations.
 
 In the context of data exchange (see, e.g., \cite{fagin05}) the relational database is divided into a set of source relations $\mathcal S$ and a set of target relations  $\mathcal T$. Dependencies are used to describe what kind of properties should hold when data is transferred from the source schema to the target schema. In this setting a new taxonomy of embedded dependencies rises: An embedded dependency  $\forall \tuple x\big ( \phi(\tuple x) \rightarrow \exists \tuple y \psi(\tuple x, \tuple y) \big )$ is \emph{source-to-target} if the relation symbols occurring in $\phi$ and $\psi$ are from  $\mathcal S$ and  $\mathcal T$, respectively. The embedded dependency is \emph{target} if the relation symbols occurring in it are from  $\mathcal T$. There is no direct way to study these classes of dependencies in the uni-relational setting of team semantics. 
 In this paper we propose a general framework in which these inherently poly-relational dependencies can be studied.

 In Section \ref{sec:unitopoly} we  lay the foundations of polyteam semantics.  The shift to polyteams is exemplified in Section \ref{sec:polydep}  by the definition of poly-dependence atoms and an Armstrong type axiomatisation for the associated implication problem. In Section \ref{sec:pteamsemantics} polyteam semantics is extended from atoms to complex formulae.  Section \ref{sect:rel}  studies the relationship between polyteam and team semantics. We show that some polyteam logics can be simulated  with team logics. Section \ref{sec:expressivity} examines the expressive power of the new logics over polyteams. We characterise  poly-independence (poly-dependence)  logic as the maximal logic capable of  defining  all (downwards closed) properties of polyteams definable in existential second-order logic. Finally we relate poly-inclusion logic to a fragment of greatest fixed point logic.

\section{From uni-dependencies to poly-dependencies}\label{sec:unitopoly}
We start by defining the familiar dependency notions from the team semantics literature. In Section $\ref{sec:polydep}$ we introduce a novel poly-relational version of dependence atoms and establish a finite axiomatisation of its implication problem. We then continue to present poly-relational versions of inclusion, exclusion, and independence atoms, and a general notion of a poly-relational dependency atom. We conclude this section by relating the embedded dependencies studied in database theory to our new setting.

\subsection{Dependencies in team semantics}\label{sec:teams}
Vocabularies $\tau$ are  sets of relation symbols with prescribed arities. For each $R\in\tau$, let $\ar{R}\in Z_+$ denote the arity of $R$. A $\tau$-structure is a tuple $\A = \big(A, (R^\A_i)_{R_i\in\tau}\big)$, where $A$ is a set and each $R^\A_i$ is an $\ar{R_i}$-ary relation on $A$ (i.e., $R^\A_i \subseteq A^{\ar{R_i}}$). We use $\A$, $\mathfrak{B}$, etc. to denote $\tau$-structures and $A$, $B$, etc.\ to denote the corresponding domains.

Let $D$ be a finite set of first-order variables and $A$ be a nonempty set. A function $s\colon D \to A$ is called an \emph{assignment}. For a variable $x$ and $a\in A$, the assignment $s(a/x)\colon D\cup\{x\} \rightarrow A$ is obtained from $s$ as follows:
\[
s(a/x)(y) :=
\begin{cases}
a & \text{if $y=x$},\\
s(y) &\text{otherwise}. 
\end{cases} 
\]

A $\emph{team}$ $X$ is a set of assignments with a common domain $D$ and codomain $A$. If $\A$ is a $\tau$-structure and $X$ a team with codomain $A$, then we say that $X$ is a team of $\A$. Let $\tuple x=(x_1,\ldots,x_n)$ be a sequence of variables, and $s$ an assignment. We write $s(\tuple x)$ to denote the sequence $\big(s(x_1),\ldots ,s(x_n)\big)$, and $X(\tuple x)$ for the set of values $\{s(\tuple x)\mid s\in X\}$. For a set (or sequence) of variables $V$, we write $X\upharpoonright V$ for the restriction of $X$ to $V$.

The following dependency atoms were introduced in \cite{vaananen07,galliani12,gradel13}.

\begin{definition}[Dependency atoms]
Let $\A$ be a structure and $X$ a team with codomain $A$. If $\tuple x, \tuple y$ are variable sequences, then $\dep{\tuple x,\tuple y}$ is a dependence atom with the truth condition:
$$\A \models_X \dep{\tuple x,\tuple y} \textrm{ if for all } s,s'\in X \textrm{ s.t.\ }s(\tuple x) = s'(\tuple x), \textrm{ it holds that } s(\tuple y)=s'(\tuple y).$$

If $\tuple x,\tuple y$ are variable sequences of the same length, then $\tuple x \subseteq \tuple y$ is an inclusion atom and $\exc{\tuple{x}}{\tuple{y}}$ an exclusion atom with satisfaction defined as follows:
%INC
\begin{align*}
&\A \models_X \tuple x \sub \tuple y \textrm{ if for all }s\in X \textrm{ there exists } s'\in X\textrm{ such that } s(\tuple x)=s'(\tuple y).\\
%EXC
&\A \models_X \exc{\tuple x}{\tuple y} \textrm{ if for all }s,s'\in X: s(\tuple x)\neq s'(\tuple y).
%IND
\end{align*}
If $\tuple x,\tuple y,\tuple z$ are variable sequences, then $\cixyz$ is a conditional independence atom with satisfaction defined by
\begin{align*}&\A\models_{X} \cixyz \textrm{ if for all } s,s'\in X\textrm{ such that }s(\tuple x)=s'(\tuple x)\textrm{ there exists } s''\in X \\&\textrm{ such that }s''(\tuple x)=s(\tuple x)\textrm{, }s''(\tuple y)=s(\tuple y), \textrm{ and }s''(\tuple z)=s'( \tuple z). 
\end{align*}
\end{definition}

Note that in the previous definitions it is allowed that some or all of the vectors of variables have length $0$. For example, $\A\models_X\dep{\tuple y}$ holds iff $\forall s\in X: s(\tuple y)=\tuple c$ holds for some fixed tuple $\tuple c$, where $\dep{\tuple y}$ denotes the dependence atom $\dep{\tuple x,\tuple y}$ such that $\tuple x$ is of length $0$. Also, $\A\models_X\cixyz$ holds always if either of the vectors $\tuple y$ or $\tuple z$ is of length $0$.

All the aforementioned dependency atoms have corresponding variants in relational databases. One effect of this relationship is that the axiomatic properties of these dependency atoms trace back  to well-known results in database theory. Armstrong's axioms for functional dependencies constitute a finite axiomatisation for dependence atoms \cite{armstrong74,gradel13}, and inclusion atoms can be finitely axiomatised using the axiomatisation for inclusion dependencies  \cite{DBLP:journals/jcss/CasanovaFP84}. On the other hand, the non-axiomatisability and undecidability of the (finite and unrestricted) implication problem for embedded multivalued dependencies both carry over to conditional independence atoms \cite{herrmann95,Sagiv:1982,parker:1980}.
Restricting attention to the so-called \emph{pure independence atoms}, i.e., atoms of the form $\indep{\emptyset}{\tuple x}{\tuple y}$, a finite axiomatisation is obtained by relating to marginal independence in statistics \cite{geiger:1991,KontinenLV13}.

 \subsection{The notion of poly-dependence}\label{sec:polydep}
For each $i\in\mathbb{N}$, let $\var{i}$ denote a distinct countable set of first-order variable symbols. We say that these variables are of sort $i$. Relating to databases, sorts correspond to table names. Usually we set $\var{i}=\{x^i_j\mid j\in\mathbb{N}\}$. We write $x^i$, $y^i$,  $x^i_j$ to denote variables from $\var{i}$, and $\tuple x^i$ to denote tuples of variables from $\var{i}$. Sometimes we drop the index $i$ and write simply $x$ and  $\tuple x$ instead of $x^i$ and  $\tuple x^i$, respectively. Note that $\tuple x$ is always a tuple of variables of a single sort. In order to simplify  notation, we sometimes  write $\tuple{x}^i$ and $\tuple{x}^j$ to denote arbitrary tuples of variables of sort $i$ and $j$, respectively. We emphasise that $\tuple{x}^i$ and $\tuple{x}^j$ might be of different length and may consist of distinct variables. Let $\A$ be  a $\tau$-structure and let $D_i\subseteq\var{i}$ for all $i\in \N$. A tuple $\vec{X}=(X_i)_{i\in\mathbb{N}}$ is a \emph{polyteam} of $\A$ with domain $\tuple D =(D_i)_{i\in \N}$, if  $X_i$ is a team with domain $D_i$ and co-domain $A$ for each $i\in\mathbb{N}$. We identify $\pt X$ with $(X_0, \ldots ,X_n)$ if $X_i$ is the singleton team consisting of the empty assignment for all $i$ greater than $n$.
Let $\vec{X}=(X_i)_{i\in\mathbb{N}}$ and $\vec{Y}=(Y_i)_{i\in\mathbb{N}}$ be two polyteams. We say that $\tuple X$ is a \emph{subteam} of $\vec{Y}$ if $X_i\sub Y_i$ for all $i\in \mathbb{N}$. By the \emph{union} (resp. \emph{intersection}) of $\tuple X$ and $\tuple Y$ we denote the polyteam $(X_i\cup Y_i)_{i\in\mathbb{N}}$ (resp. $(X_i\cap Y_i)_{i\in\mathbb{N}}$). By a slight abuse of notation we write $\tuple X \cup \tuple Y$ (resp. $\tuple X \cap \tuple Y$) for the union (resp. intersection) of $\tuple X$ and $\tuple Y$, and $\tuple X \sub \tuple Y$ to denote that $\tuple X$ is a subteam of $\tuple Y$. For a tuple $\tuple V=(V_i)_{i\in\mathbb{N}}$ where $V_i\sub \var{i}$, the \emph{restriction} of $\tuple X$ to $\tuple V$, written $\tuple X \upharpoonright \tuple V$, is defined as $(X_i\upharpoonright V_i)_{i\in\mathbb{N}}$.

Next we  generalise dependence atoms to the polyteam setting. In contrast to the standard dependence atoms, poly-dependence atoms declare functional dependence of variables over two teams.

 \noindent
\textbf{Poly-dependence}.
Let $\tuple x^i\tuple y^i$ and $\tuple u^j \tuple v^j$ be sequences of variables such that  $\tuple x^i$ and $\tuple u^j$, and $\tuple y^i$ and $\tuple v^j$ have the same length, respectively.
Then $\pdep{\tuple x^i}{\tuple y^i}{\tuple u^j}{\tuple v^j}$ is a \emph{poly-dependence atom} whose satisfaction relation $\models_{\pt X}$ is defined as follows:
\[\A \models_{\pt X} \pdep{\tuple x^i}{\tuple y^i}{\tuple u^j}{\tuple v^j}\Leftrightarrow\forall s\in X_i\forall s'\in X_j: s(\tuple x^i)=s'(\tuple u^j)\text{ implies }s( \tuple y^i)=s'( \tuple v^j) .\]
Note that the atom $\pdep{\tuple x}{\tuple y}{\tuple x}{\tuple y}$ corresponds  to the  dependence atom $\dep{\tuple x, \tuple y}$. For empty tuples $\tuple x^i$ and $\tuple u^j$ the  poly-dependence atom  reduces to a``poly-constancy atom'' $\pcon{\tuple y^i}{\tuple v^j}$.
We will later show (Remark \ref{uniremark}) that poly-dependence atoms of the form $\pdep{\tuple x^i}{\tuple y^i}{\tuple u^i}{\tuple v^i}$ can be expressed with formulae using only ordinary dependence atoms.  Thus poly-dependence atoms of this form are considered as primitive notions only when $\tuple x^i\tuple y^i = \tuple u^i \tuple v^i$; otherwise $\pdep{\tuple x^i}{\tuple y^i}{\tuple u^i}{\tuple v^i}$ is considered as a shorthand for the equivalent formula obtained from Remark \ref{uniremark}.

The ability to reason about database dependencies facilitates many data management tasks such as schema design, query optimisation, and integrity maintenance. Keys, inclusion dependencies, and functional dependencies in particular have a crucial role in all of these processes. A traditional way to approach the interaction between dependencies has been the utilisation of proof systems similar to natural deduction systems in logic.  The most significant of all these systems is  Armstrong's axiomatisation for functional dependencies. This inference system consists of only  three rules which we depict below using the standard notation for functional dependencies, i.e., $X\to Y$ denotes that an attribute set $X$ functionally determines another attribute set $Y$. 
\begin{definition}[Armstrong's axiomatisation \cite{armstrong74}]\label{armstrong}
%~\\
\begin{itemize}
\item Reflexivity: If $Y\sub X$, then $X\to Y$
\item Augmentation: if $X \to Y$, then $XZ\to YZ$
\item Transitivity: if $X \to Y$ and $Y\to Z$, then $X\to Z$
\end{itemize}
\end{definition}

Our first objective is to generalise Armstrong's axiomatisation to the poly-dependence setting. To this end, we assemble the three rules of Armstrong and introduce three auxiliary rules: Union, Symmetry, and Weak Transitivity. Contrary to the Armstrong's proof system, here Union is not reducible to Transitivity and Augmentation because we operate with sequences instead of sets of variables or attributes. Symmetry in turn is imposed by the sequential notation employed by the poly-dependence atom. Weak Transitivity exhibits transitivity of equalities on the right-hand side of a poly-dependence atom, a phenomenon that arises only in the polyteam setting.

\begin{definition}[Axiomatisation for poly-dependence atoms]\label{polyaxioms}
\begin{itemize}
\item Reflexivity: $ \pdep{\tuple x^i}{\text{pr}_k (\tuple x^i)}{\tuple y^j}{\text{pr}_k(\tuple y^j)}$, where $k=1, \ldots ,|\tuple x^i|$ and $\text{pr}_k$ takes the $k$th projection of a sequence.
\item Augmentation: if $\pdep{\tuple x^i}{\tuple y^i}{\tuple u^j}{\tuple v^j}$, then $ \pdep{\tuple x^i\tuple z^i}{\tuple y^i\tuple z^i}{\tuple u^j\tuple w^j}{\tuple v^j\tuple w^j}$
\item Transitivity: if $\pdep{\tuple x^i}{\tuple y^i}{\tuple u^j}{\tuple v^j}$ and $\pdep{\tuple y^i}{\tuple z^i}{\tuple v^j}{\tuple w^j}$, then $\pdep{\tuple x^i}{\tuple z^i}{\tuple u^j}{\tuple w^j}$
\item Union: if $\pdep{\tuple x^i}{\tuple y^i}{\tuple u^j}{\tuple v^j}$ and $\pdep{\tuple x^i}{\tuple z^i}{\tuple u^j}{\tuple w^j}$ then $\pdep{\tuple x^i}{\tuple y^i\tuple z^i}{\tuple u^j}{\tuple v^j\tuple w^j}$
\item Symmetry: if $\pdep{\tuple x^i}{\tuple y^i}{\tuple u^j}{\tuple v^j}$, then  $\pdep{\tuple u^j}{\tuple v^j}{\tuple x^i}{\tuple y^i}$
\item Weak Transitivity: if $\pdep{\tuple x^i}{\tuple y^i\tuple z^i\tuple z^i}{\tuple u^j}{\tuple v^j\tuple v^j\tuple w^j}$, then  $\pdep{\tuple x^i}{\tuple y^i}{\tuple u^j}{\tuple w^j}$
\end{itemize}
\end{definition}

This proof system forms a complete characterisation of logical implication for poly-dependence atoms. 
We use $\models$ to refer to logical implication, i.e., we write $\Sigma\models \sigma$ if $\A \models_{\pt X} \Sigma $ implies $ \A \models_{\pt X} \sigma$ for all  models $\A$ and polyteams $\pt X$. Given an \emph{axiomatisation} $\calR$, that is, a set of axioms and inference rules, we write $\Sigma \vdash_{\calR}\sigma$ if $\calR$ yields a proof of $\sigma$ from $\Sigma$. Given a class of dependency atoms $\calC$, we then say that $\calR$ is sound (complete, resp.) for $\calC$ if for all finite sets of dependency atoms $\Sigma\cup\{\sigma\}$ from $\calC$, $\Sigma\vdash_{\calR}\sigma $ implies (is implied by, resp.) $ \Sigma\models \sigma$.

\begin{theorem}
The axiomatisation of Def. \ref{polyaxioms} is sound and complete for poly-dependence atoms.
\end{theorem}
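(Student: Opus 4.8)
The plan is to treat soundness and completeness separately. Soundness is a direct verification: for each of the six rules of Def.~\ref{polyaxioms} I would check, from the truth condition of $\pdep{\tuple x^i}{\tuple y^i}{\tuple u^j}{\tuple v^j}$, that it preserves logical implication. Reflexivity, Augmentation, Transitivity and Union are checked just as the classical Armstrong rules of Def.~\ref{armstrong} once tuples are read componentwise, and Symmetry is immediate since ``$s(\tuple x^i)=s'(\tuple u^j)$ implies $s(\tuple y^i)=s'(\tuple v^j)$'' is symmetric in the two quantified assignments. The only new case is Weak Transitivity: if $\mathfrak{A}\models_{\pt X}\pdep{\tuple x^i}{\tuple y^i\tuple z^i\tuple z^i}{\tuple u^j}{\tuple v^j\tuple v^j\tuple w^j}$ and $s\in X_i$, $s'\in X_j$ satisfy $s(\tuple x^i)=s'(\tuple u^j)$, then $s(\tuple y^i)=s'(\tuple v^j)$, $s(\tuple z^i)=s'(\tuple v^j)$ and $s(\tuple z^i)=s'(\tuple w^j)$, whence $s(\tuple y^i)=s'(\tuple v^j)=s(\tuple z^i)=s'(\tuple w^j)$, i.e.\ $\pdep{\tuple x^i}{\tuple y^i}{\tuple u^j}{\tuple w^j}$ holds.

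For completeness I would argue the contrapositive, building from $\Sigma\not\vdash_{\calR}\sigma$ a model and polyteam satisfying $\Sigma$ but not $\sigma$. First I would normalise: Reflexivity, Transitivity and Union derive a decomposition rule taking $\pdep{\tuple x^i}{\tuple y^i\tuple z^i}{\tuple u^j}{\tuple v^j\tuple w^j}$ to $\pdep{\tuple x^i}{\tuple y^i}{\tuple u^j}{\tuple v^j}$, and since $\Sigma\models\pdep{\tuple x^i}{\tuple y^i}{\tuple u^j}{\tuple v^j}$ iff $\Sigma\models\pdep{\tuple x^i}{\mathrm{pr}_k(\tuple y^i)}{\tuple u^j}{\mathrm{pr}_k(\tuple v^j)}$ for all $k$, I may assume $\sigma=\pdep{\tuple x^i}{y^i}{\tuple u^j}{v^j}$ has a single variable on each side of the bar. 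Assume first $i\ne j$; let $V_\ell$ be the finite set of sort-$\ell$ variables in $\Sigma\cup\{\sigma\}$, and let $\approx$ be the least equivalence relation on $\{(\ell,w):w\in V_\ell\}$ such that $(i,x^i_k)\approx(j,u^j_k)$ for all $k$ and, for every atom $\pdep{\tuple p^\ell}{\tuple q^\ell}{\tuple r^m}{\tuple t^m}$ of $\Sigma$ with $\{\ell,m\}=\{i,j\}$, if $(\ell,p^\ell_k)\approx(m,r^m_k)$ for all $k$ then $(\ell,q^\ell_k)\approx(m,t^m_k)$ for all $k$. Taking the $\approx$-classes (plus a dummy, if there are none) as the domain $A$, putting $s_\ell(w):=[(\ell,w)]_\approx$, and letting $\pt X$ have $X_i=\{s_i\}$, $X_j=\{s_j\}$ and $X_k=\emptyset$ for all other $k$, one checks that $\mathfrak{A}\models_{\pt X}\Sigma$: atoms mentioning a sort outside $\{i,j\}$ are vacuously true because a team they quantify over is empty; same-sort atoms of sort $i$ or $j$ are ordinary dependence atoms, true over the singleton teams; and a cross-sort atom is inspected only at the pair $(s_i,s_j)$ (or $(s_j,s_i)$), where its premise holds exactly when the closure clause fires (poly-constancy $\pcon{\tuple q^\ell}{\tuple t^m}$ being the case of an empty, always-firing premise). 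Since $s_i(\tuple x^i)=s_j(\tuple u^j)$ by the seed condition, $\mathfrak{A}\not\models_{\pt X}\sigma$ follows as soon as $(i,y^i)\not\approx(j,v^j)$. In the remaining case $i=j$ the target $\dep{\tuple x^i,y^i}$ cannot be violated by a singleton team, so there I would instead take $X_i$ to be the classical two-row Armstrong relation for the ordinary dependence atoms of sort $i$ in $\Sigma$ (rows agreeing exactly on $\{w\in V_i:\Sigma\vdash_{\calR}\dep{\tuple x^i,w}\}$) and $X_k=\emptyset$ for $k\ne i$.

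The crux, and the step I expect to be the main obstacle, is to show $(i,y^i)\not\approx(j,v^j)$. Because (for $i\ne j$) every edge of $\approx$ joins a sort-$i$ occurrence to a sort-$j$ occurrence, $(i,y^i)\approx(j,v^j)$ would force an odd-length path $(i,y^i)=(i,w_0)-(j,a_1)-(i,w_1)-(j,a_2)-\cdots-(j,v^j)$ in this bipartite graph. I would prove by induction on path length that $\Sigma\vdash_{\calR}\pdep{\tuple x^i}{w}{\tuple u^j}{w'}$ for every edge $(i,w)-(j,w')$: seed edges come from Reflexivity; an edge produced by the closure clause is obtained, when the inducing atom has empty left-hand side, by Augmentation, and otherwise by applying Union to the induction hypotheses on the atom's premise, then Transitivity with the atom, then decomposition (prefixing a Symmetry step when the atom points from sort $j$ to sort $i$); and splicing longer paths is exactly where Weak Transitivity is used — from $\Sigma\vdash_{\calR}\pdep{\tuple x^i}{w_0}{\tuple u^j}{a_1}$, $\Sigma\vdash_{\calR}\pdep{\tuple x^i}{w_1}{\tuple u^j}{a_1}$ and (induction hypothesis on the tail) $\Sigma\vdash_{\calR}\pdep{\tuple x^i}{w_1}{\tuple u^j}{v^j}$, one Union gives $\pdep{\tuple x^i}{w_0 w_1 w_1}{\tuple u^j}{a_1 a_1 v^j}$ and one Weak Transitivity step yields $\pdep{\tuple x^i}{w_0}{\tuple u^j}{v^j}$. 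Running this down the whole path gives $\Sigma\vdash_{\calR}\sigma$, a contradiction; the case $i=j$ reduces similarly to the classical Armstrong argument adapted to sequences. Making this invariant precise — in particular getting the induction to mesh the path-extending steps with the Weak-Transitivity step that collapses a detour through a shared sort-$j$ occurrence, the phenomenon proper to polyteams — is the delicate part, and it is what shows that Weak Transitivity, together with Symmetry and Union (which the sequential notation forces), is genuinely needed beyond the classical Armstrong rules.
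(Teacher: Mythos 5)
Your proposal is correct and follows essentially the same route as the paper: both arguments build a countermodel with one assignment per sort whose values are equivalence classes of variables, with Union lifting componentwise derivability, Symmetry handling reversed atoms, and Weak Transitivity collapsing a detour through a shared sort-$j$ variable. The only differences are presentational — the paper defines the relation $\sim$ directly via derivability and proves it is an equivalence relation closed under $\Sigma$, whereas you define $\approx$ as the least such closure and prove it is contained in derivability, and your explicit two-row relation for the case $i=j$ is just the standard proof of the Armstrong completeness theorem that the paper invokes as a black box and then simulates rule-by-rule.
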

\begin{proof}
The proof of soundness is straightforward and omitted. We show that the axiomatisation is complete, i.e., that $\Sigma\models \sigma$ implies $\Sigma\vdash \sigma $ for a set $\Sigma\cup\{\sigma\}$ of poly-dependence atoms. Assume $\sigma$ is $\pdep{\tuple x^i}{\tuple y^i}{\tuple x^j}{\tuple y^j}$. First we consider the case where $i=j$ in which case $\sigma$ is a standard dependence atom. Let $\Sigma^*$ be the subset of $\Sigma$ consisting of all standard dependence atoms over $\var{i}$. Since all teams satisfying $\Sigma^*$ can be extended to a polyteam satisfying $\Sigma$ by introducing new empty teams, we have that $\Sigma^*\models \sigma$ in the team semantics setting. Since dependence atoms $\dep{\tuple x,\tuple y}$ in team semantics correspond to functional dependencies $\{x\in \tuple x^i\}\rightarrow \{y\in \tuple y^i\}$ in relational databases (see e.g. \cite{gradel13}), Armstrong's complete axiomatisation from Definition \ref{armstrong} yields a deduction of $\sigma_0$ from $\Sigma^*_0$ where $\Sigma^*_0$ and $\{\sigma_0\}$ are obtained from $\Sigma^*$ and $\sigma$ by replacing dependence atoms with their corresponding functional dependencies. Since dependence atoms are provably order-independent (i.e. one derives $\dep{\tuple x_0,\tuple x_1}$ from $\dep{\tuple y_0,\tuple y_1}$ by Reflexivity, Union, and Transitivity if $\tuple x_i$ and $\tuple y_i$ list the same variables), the deduction in Armstrong's system can be  simulated with the rules in Definition \ref{polyaxioms}. This proves the case $i=j$.

Let us then consider the case $i\neq j$. We will show that  $\Sigma\not\vdash \sigma $ implies $\Sigma\not\models \sigma$. Assume $\Sigma\not\vdash \sigma$.
Define first a binary relation $\sim$ on $\var{i}\cup\var{j}$ such that $a^i\sim a^j$ if $\Sigma\vdash \pdep{\tuple x^i}{a^i}{\tuple x^j}{a^j}$,  $a^j\sim a^i$ if $\Sigma\vdash \pdep{\tuple x^j}{a^j}{\tuple x^i}{a^i}$, and $a^i\sim b^i$ ($a^j\sim b^j$, resp.) if $a^i=b^i$ or $\Sigma\vdash \pdep{\tuple x^i}{a^ib^i}{\tuple x^j}{a^ja^j}$ for some $a^j$ ($a^j=b^j$ or $\Sigma\vdash \pdep{\tuple x^j}{a^jb^j}{\tuple x^i}{a^ia^i}$ for some $a^i$, resp.). We show that $\sim$ is an equivalence relation.
\begin{itemize}
\item Reflexivity: Holds by definition.
\item Symmetry: First note that $a^i\sim a^j$ and $a^j\sim a^i$ are derivably equivalent by the symmetry rule. Assume then that $a^i\sim b^i$ in which case $\pdep{\tuple x^i}{a^ib^i}{\tuple x^j}{a^ja^j}$ is derivable for some $a^j$. Then derive $\pdep{a^ib^i}{b^i}{a^ja^j}{a^j}$ and $\pdep{a^ib^i}{a^i}{a^ja^j}{a^j}$ by using the reflexivity rule, and then $\pdep{\tuple x^i}{b^i}{\tuple x^j}{a^j}$ and $\pdep{\tuple x^i}{a^i}{\tuple x^j}{a^j}$ by using the transitivity rule. Finally derive $\pdep{\tuple x^i}{b^ia^i}{\tuple x^j}{a^ja^j}$ by using the union rule.
\item Transitivity: Assume first that $a^i\sim b^i\sim c^i$, where $a^i, b^i,c^i$ and are pairwise distinct. Then  $\pdep{\tuple x^i}{a^ib^i}{\tuple x^j}{a^ja^j}$ and $\pdep{\tuple x^i}{b^ic^i}{\tuple x^j}{b^jb^j}$ are derivable for some $a^j$ and $b^j$. Then analogously to the previous case assemble $\pdep{\tuple x^i}{a^ib^ib^i}{\tuple x^j}{a^ja^jb^j}$ which admits $\pdep{\tuple x^i}{a^i}{\tuple x^j}{b^j}$ by weak transitivity, and detach $\pdep{\tuple x^i}{c^i}{\tuple x^j}{b^j}$ from $\pdep{\tuple x^i}{b^ic^i}{\tuple x^j}{b^jb^j}$.  By the union rule we then obtain $\pdep{\tuple x^i}{a^ic^i}{\tuple x^j}{b^jb^j}$ and thus that $a^i\sim c^i$. Since all the other cases are analogous, we observe that $\sim$ is transitive.
\end{itemize}

Let $s$ be a function that maps each $x\in \var{i}\cup\var{j}$ that appears in $\Sigma \cup\{\sigma\}$ to the equivalence class $x/\sim$. We define $\pt X=(X_i,X_j)$ where $X_k=\{s\upharpoonright \var{k}\}$ for $k=i,j$. First notice that $\pt X\not\models \sigma$ for, by union,  it cannot be the case that $\text{pr}_k(\tuple y^i)\sim \text{pr}_k(\tuple y^j)$ for all $k=1, \ldots ,|\tuple y^i|$. It suffices to show that $\pt X$ satisfies each $\pdep{\tuple u^m}{\tuple v^m}{\tuple u^n}{\tuple v^n}$ in $\Sigma$. If $m=n$ or $\{m,n\}\neq \{i,j\}$,  the atom is trivially satisfied. Hence, and by symmetry, we may assume that the atom is of the form $\pdep{\tuple u^i}{\tuple v^i}{\tuple u^j}{\tuple v^j}$. Assume that $s(\tuple u^i)=s(\tuple u^j)$, that is, $\text{pr}_k(\tuple u^i)\sim \text{pr}_k(\tuple u^j)$ for all $k=1, \ldots ,|\tuple u^i|$. We obtain by the union rule that $\pdep{\tuple x^i}{\tuple u^i}{\tuple x^j}{\tuple u^j}$ is derivable, and hence by the transitivity rule that  $ \pdep{\tuple x^i}{\tuple v^i}{\tuple x^j}{\tuple v^j}$ is also derivable. Therefore, by using the reflexivity and transitivity rules we conclude that $s(\tuple v^i)=s(\tuple v^j)$. \qedd
\end{proof}

 %%%%%
 
 \subsection{A general notion of a poly-dependency}\label{sec:polygen}
Next we consider suitable polyteam generalisations for the dependencies discussed in Section \ref{sec:teams}  and also define a general notion of poly-dependency. This generalisation is immediate  for inclusion atoms which are  inherently multi-relational; relational database management systems maintain referential integrity by enforcing inclusion dependencies specifically between two distinct tables. With poly-inclusion atoms these multi-relational features can now be expressed. 

\noindent
\textbf{Poly-inclusion.}
Let $\tuple x^i$ and $\tuple y^j$ be sequences of variables of the same length. Then $\tuple x^i \sub \tuple y^j$ is a \emph{poly-inclusion atom} whose satisfaction relation $\models_{\pt X}$ is defined as follows:
\[\A \models_{\pt X} \pinc{\tuple x^i}{\tuple y^j} \Leftrightarrow \forall s\in X_i\exists s'\in X_j: s(\tuple x^i)=s'(\tuple y^j).\]
If $i=j$, then the atom is the standard inclusion atom.

\noindent
\textbf{Poly-exclusion.}
Let $\tuple x^i$ and $\tuple y^j$ be sequences of variables of the same length. Then $\exc{\tuple x^i}{ \tuple y^j}$ is a
\emph{poly-exclusion atom} whose satisfaction relation $\models_{\pt X}$ is defined as follows:
\[\A \models_{\pt X} \exc{\tuple x^i}{\tuple y^j} \Leftrightarrow \forall s\in X_i, s'\in X_j: s(\tuple x^i)\neq s'(\tuple y^j).\]
If $i=j$, then the atom is the standard exclusion atom.

\noindent 
\textbf{Poly-independence}
Let $\tuple x^i$, $\tuple y^i$, $\tuple a^j$,$\tuple b^j$, $\tuple u^k$, $\tuple v^k$, and $\tuple w^k$ be tuples of variables such that $|\tuple x^i|=|\tuple a^j|=|\tuple u^k|$, $|\tuple y^i|=|\tuple v^k|$, $|\tuple b^j|=|\tuple w^k|$. Then $\indep{\tuple{x}^i,\tuple{a}^j / \tuple u^k}{\tuple y^i / \tuple v^k}{\tuple b^j / \tuple w^k}$ is a
\emph{poly-independence atom} whose satisfaction relation $\models_{\pt X}$ is defined as follows:
\begin{align*}
&\A \models_{\pt X} \indep{\tuple{x}^i,\tuple{a}^j / \tuple u^k}{\tuple y^i / \tuple v^k}{\tuple b^j / \tuple w^k} \Leftrightarrow \forall s\in X_i, s'\in X_j: s(\tuple x^i)= s'(\tuple a^j)\text{ implies }\\
&\exists s''\in X_k: s''(\tuple u^k \tuple v^k)=s(\tuple x^i \tuple y^i) \text{ and } s''(\tuple w^k)=s'(\tuple b^j).
\end{align*}
The atom $\indep{\tuple{x},\tuple{x} / \tuple x}{\tuple y / \tuple y}{\tuple z / \tuple z}$, where all variables are of the same sort, corresponds to the standard independence atom $\indep{\tuple x}{\tuple y}{\tuple z}$. Furthermore, a \emph{pure poly-independence atom} is an atom of the form $\indep{\emptyset,\emptyset / \emptyset}{\tuple y^i / \tuple v^k}{\tuple b^j / \tuple w^k}$, written using a shorthand $\indep{}{\tuple y^i / \tuple v^k}{\tuple b^j / \tuple w^k}$.

Poly-independence atoms are closely related to equi-join operators of relational databases as the next example exemplifies. 
\begin{example}\label{ex:1}
A relational database schema
\begin{align*}
\textsc{P(rojects)}=&\{\texttt{project,team}\}, \quad  \textsc{T(eams)}=\{\texttt{team,employee}\},\\
\textsc{E(mployees)}=&\{\texttt{employee,team,project}\},
\end{align*}
stores information about distribution of employees for teams and projects in a workplace. The poly-independence atom
\begin{equation}\label{eqA}
\indep{\textsc{P}[\texttt{team}], \textsc{T}[\texttt{team}] / \textsc{E}[\texttt{team}]} {\textsc{P}[\texttt{project}] / \textsc{E}[\texttt{project}]} {\textsc{T}[\texttt{employee}] / \textsc{E}[\texttt{employee}]}
\end{equation}
expresses that the relation \textsc{Employees} includes as a subrelation the natural join of \textsc{Projects} and \textsc{Teams}. If furthermore $\textsc{E}[\texttt{project,team}]\sub \textsc{P}[\texttt{project,team}]$ and $ \textsc{E}[\texttt{team,employee}]\sub \textsc{T}[\texttt{team,employee}]$ hold, then \textsc{Employees} is exactly this natural join.
\end{example}

In addition to the poly-atoms described above, we define the notion of  generalised poly-atoms that is analogous to the notion of generalised  atoms of \cite{kuusisto12}.

\noindent
\textbf{Generalised poly-atoms.} Let $n\in \mathbb{N}$, and let $(j_1, \ldots ,j_n)$ be a sequence of positive integers. A \emph{generalised quantifier}
 of type $(j_1, \ldots ,j_n)$ is a collection $Q$ of relational structures $(A,R_1, \ldots ,R_n)$ (where each $R_i$ is $j_i$-ary) that is closed under isomorphisms. For every sequence $(\tuple x_1, \ldots ,\tuple x_n)$, where $\tuple x_i$ are length $j_i$ tuples  of variables from some $\var{l_i}$, $\alpha_Q(\tuple x_1,\ldots ,\tuple x_n)$ is a %\emph{instantiation} of $\alpha$ %
 \emph{generalised poly-atom}  of type $(j_1, \ldots ,j_n)$ and
 of sort $\{l_1, \ldots ,l_n\}$. For a structure $\A$ and polyteam $\pt X$ where $\tuple x_i\sub \Dom{X_{l_i}}$, the satisfaction relation with respect to $\alpha_Q(\tuple x_1,\ldots ,\tuple x_n)$ is defined as follows:
\begin{multline*}
\A\models_{\pt X} \alpha_Q(\tuple x_1,\ldots ,\tuple x_n) %\\
\Leftrightarrow \Big(\Dom{\A}, R_1:=\rel{X_{l_1}, \tuple x_1} \ldots ,R_n:=\rel{X_{l_n}, \tuple x_n}\Big)\in Q.
\end{multline*}
By $\rel{X, \tuple x}$, for $\tuple x = (x_1, \ldots ,x_m)$, we denote the relation $\{(s(x_1),\ldots ,s(x_m))\mid s\in X\} $. 
A generalised poly-atom $\alpha_Q(\tuple x_1,\ldots ,\tuple x_n)$ that has a singleton sort is called a \emph{uni-atom}.
When referring to the set of all poly-atoms of the form $\alpha_Q(\tuple x_1,\ldots ,\tuple x_n)$, for a fixed $Q$, we omit the tuples $\tuple x_1,\ldots ,\tuple x_n$ and write simply poly-atom $\alpha_Q$.
We say that an atom $\alpha_Q$ is definable in a logic $\calL$ if the class $Q$ is definable in $\calL$.
For instance, we notice that poly-inclusion atoms of the form $(x^i,y^i)\sub (u^j,v^j)$ are first-order definable generalised poly-atoms of type $(2,2)$.

\subsection{Database dependencies as poly-atoms}

Embedded dependencies in a multi-relational context can now be studied with the help of generalised poly-atoms and polyteam semantics. Conversely, strong results obtained in the study of database dependencies can be transferred and generalised for stronger results in the polyteam setting.
In particular, each embedded dependency can be seen as a defining formula for a generalised poly-atom, and hence the classification of embedded dependencies naturally yield a corresponding classification of generalised poly-atoms.
For example, the class
\begin{align*}
\mathcal{C}:=\{\alpha_Q(\tuple x_1,\ldots ,\tuple x_n) \mid \,&Q \text { is definable by an $\FO(R_1,\dots,R_n)$-sentence in}\\
&\text{the class of equality-generating dependencies}\}
\end{align*}
is the class of \emph{equality-generating} poly-atoms.  The defining formula of the generalised atom of type (2,2) that captures the poly-dependence atom of type $\pdep{x^i}{y^i}{u^j}{v^j}$ is
\[
\forall x_1 \forall x_2\forall y_1\forall y_2  \big((R_1(x_1,x_2) \land R_2(y_1,y_2) \land x_1=y_1) \rightarrow x_2=y_2\big).
\]
Thus poly-dependence atoms are included in the class of equality-generating poly-atoms. 

In order to study data exchange in the polyteam setting, we first need to define the notions of \emph{source-to-target} and \emph{target} poly-atoms. This classification of poly-atoms requires some more care as it is not enough to consider the defining formulae of the corresponding atoms, but also the variables that the atom is instantiated with. We will return to this topic briefly after we have given semantics for logics that work on polyteams.

\section{Polyteam semantics for complex formulae}\label{sec:pteamsemantics}
We  next delineate a version of team semantics suitable for the polyteam context.
We note here that it is not a priori clear what sort of modifications for connectives and quantifiers one should entertain when shifting from teams to the polyteam setting.

\subsection{Syntax and semantics}

\begin{definition}
Let $\tau$ be a set of relation symbols. 
The syntax of \emph{poly first-order logic} $\PFO(\tau)$ is given by the following grammar rules: 
\[
\phi \ddfn  x= y \mid  x \neq y \mid R({\vec{x}}) \mid  \neg R({\vec{x}})  \mid (\phi\land\phi) \mid (\phi\lor\phi)  \mid  (\phi\lor^j\phi) \mid \exists x \phi \mid  \forall x \phi,
\]
where $R\in\tau$ is a $k$-ary relation symbol, $j\in\mathbb{N}$, $\vec{x}\subseteq \var{i}^k$ and $x,y\in\var{i}$ for some $i,k\in\mathbb{N}$. 
\end{definition}
We say that $\vee$ is a \emph{global disjunction} whereas $\vee^i$ is a \emph{local disjunction}. A literal is said to be of sort $i$ if its variables are of sort $i$. Note that in the definition the scope of negation is restricted to atomic formulae.  Note also that the restriction of $\PFO(\tau)$ to formulae without the connective $\lor^j$ and using only variables of a single fixed sort is $\FO(\tau)$.

For the definition of polyteam semantics of $\PFO$, recall the definitions of teams and polyteams from Sections \ref{sec:teams} and \ref{sec:polydep}, respectively.
Let $X$ be a team, $A$ a non-empty set, and $F\colon X\to \Po(A)\setminus \{\emptyset\}$ a function. We denote by $X[A/x]$ the modified team $\{s(a/x) \mid s\in X, a\in A\}$, and by $X[F/x]$ the team $\{s(a/x)\mid s\in X, a\in F(s)\}$. Moreover let $\pt X$ be a polyteam. Then $\pt {X}[X/X_i]$ denotes the polyteam $(\dots,X_{i-1},X,X_{i+1},\dots)$.

Note that if restricted to the aforementioned single-sort fragment of $\PFO(\tau)$ the polyteam semantics below coincides with traditional team semantics, see e.g. \cite{DKV} for a definition. Thus for $\FO(\tau)$-formulae we may write $\A \models_{X_i} \phi$ instead of $\A \models_{(X_i)} \phi$.
\begin{definition}[Lax polyteam semantics]\label{def:semantics}
Let $\A$ be a $\tau$-structure and $\pt X$ a polyteam of $\A$. The satisfaction relation $\models_{\pt X}$ for poly first-order logic is defined as follows:

\begin{tabbing}
left \= $\A \models_{\pt X} (\psi \land \theta)$\, \= $\Leftrightarrow$ \= \, $\forall s\in X: s(\tuple x) \in R^{\A}$\kill
%LIT
\> $\A \models_{\pt X} x= y$ \> $\Leftrightarrow$ \> if $x,y\in\var{i}$ then $\forall s\in X_i: s(x)=s(y)$\\ 
\> $\A \models_{\pt X} x \neq y$ \> $\Leftrightarrow$ \> if $x,y\in\var{i}$ then  $\forall s\in X_i: s(x) \not= s(y)$\\
\> $\A \models_{\pt X} R(\tuple x)$ \> $\Leftrightarrow$ \> if $\vec{x}\in\var{i}^k$ then  $\forall s\in X_i: s(\tuple x) \in R^{\A}$\\ 
\> $\A \models_{\pt X} \neg R(\tuple x)$ \> $\Leftrightarrow$ \>  if $\vec{x}\in\var{i}^k$ then  $\forall s\in X_i: s(\tuple x) \not\in R^{\A}$\\
%AND
\> $\A\models_{\pt X} (\psi \land \theta)$ \> $\Leftrightarrow$ \> $\A \models_{\pt X} \psi \text{ and } \A \models_{\pt X} \theta$\\
%OR
\> $\A\models_{\pt X} (\psi \lor \theta)$ \> $\Leftrightarrow$ \> $\A\models_{\pt Y} \psi \text{ and } \A \models_{\pt Z} \theta \text{ for some  $\pt{Y},\pt{Z}\sub \pt{X}$}$ s.t.\ $\pt{Y}\cup \pt{Z} = \pt{X}$\\ 
\>$\A\models_{\pt X} (\psi \lor^j \theta)$ \> $\Leftrightarrow$ \> $\A\models_{\pt{X}[Y_j/X_j]} \psi \text{ and } \A \models_{\pt{X}[Z_j/X_j]} \theta$, \\
\> \> \> for some  $Y_j,Z_j\sub X_j$ s.t. $Y_j\cup Z_j = X_j$\\
%FORALL
\> $\A\models_{\pt X} \forall x\psi$ \> $\Leftrightarrow$ \> $\A\models_{\pt{X}[X_i[A/x] / X_i]} \psi$, when $x\in\var{i}$ \\
%EXISTS
\> $\A\models_{\pt X} \exists x\psi$ \> $\Leftrightarrow$ \> $\A\models_{\pt{X}[X_i[F/x] / X_i]} \psi \text{ holds for some } F\colon X_i \to \Po(A)\setminus \{\emptyset\}$,\\
\> \> \> when $x\in\var{i}$
\end{tabbing}
\end{definition}
\begin{remark}% \jonni{New remark.}
	Note that whereas the global disjunction is both commutative and associative, the local disjunction is only commutative. In particular $(\phi \lor^i \psi) \lor^j \theta$ is not, in general, equivalent with $\phi \lor^i (\psi \lor^j \theta)$. However the local disjunction is associative with respect to local disjunctions of the same sort, i.e., $(\phi \lor^i \psi) \lor^i \theta$ and $\phi \lor^i (\psi \lor^i \theta)$ are equivalent.
\end{remark}

The truth of a \emph{sentence} $\phi$  (i.e., a formula with no free variables) in a structure  $\A$ is defined as: 
$\A \models \phi\textrm{ if }\A \models_{(\{\emptyset\})} \phi,$
where $(\{\emptyset\})$ denotes the polyteam consisting only singleton teams of the empty assignment.  We write $\fr{\phi}$ for the set of free variables in $\phi$, and $\ifr{i}{\phi}$ for $\fr{\phi}\cap\var{i}$.

Polyteam semantics is a conservative extension of team semantics in the same fashion as team semantics is a conservative extension of Tarski semantics \cite{vaananen07}.
\begin{proposition}\label{extension}
Let $\phi\in \FO(\tau)$ whose variables are all of sort  $i\in \mathbb{N}$. Let $\A$ be a $\tau$-structure and $\pt{X}$ a polyteam of $\A$. Then
\[
\A\models_{\pt X} \phi \,\Leftrightarrow\,\A\models_{X_i} \phi \,\Leftrightarrow\, \forall s\in X_i: \A\models_s \phi,
\]
where $\models_s$ denotes the ordinary satisfaction relation of first-order logic.
\end{proposition}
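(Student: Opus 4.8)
The plan is to prove both equivalences by a single structural induction on $\phi$, since every proper subformula of $\phi$ again has all of its variables of sort $i$ and (because $\phi\in\FO(\tau)$) contains no local disjunction $\lor^j$.

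For the first equivalence $\A\models_{\pt X}\phi\Leftrightarrow\A\models_{X_i}\phi$, the governing observation is that none of the clauses of Definition~\ref{def:semantics} that are relevant to $\phi$ ever inspect or modify a component $X_j$ with $j\neq i$: atoms of sort $i$ only constrain $X_i$, conjunction and the global disjunction split all components, and the quantifiers touch only $X_i$. So I would argue: atomic $\phi$ is literally the same clause on both sides; conjunction is immediate from the induction hypothesis; for $\psi\lor\theta$, a team-semantics split $X_i=Y\cup Z$ yields the polyteam split $\pt Y\dfn\pt X[Y/X_i]$, $\pt Z\dfn\pt X[Z/X_i]$ with $\pt Y\cup\pt Z=\pt X$, and the induction hypothesis transfers satisfaction of $\psi,\theta$; conversely a polyteam split $\pt Y\cup\pt Z=\pt X$ restricts in component $i$ to $Y_i\cup Z_i=X_i$, and again the induction hypothesis applies. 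The quantifier cases are the same: the supplementing set $A$ (resp.\ a choice function $F\colon X_i\to\Po(A)\setminus\{\emptyset\}$) used on the team side is exactly the data used on the polyteam side, because $\pt X[X_i[A/x]/X_i]$ and $\pt X[X_i[F/x]/X_i]$ have $i$-th component $X_i[A/x]$ (resp.\ $X_i[F/x]$) and all other components unchanged.

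For the second equivalence $\A\models_{X_i}\phi\Leftrightarrow\forall s\in X_i:\A\models_s\phi$ (the flatness of first-order logic) I would induct similarly. The team-semantics clause for atoms is by definition the pointwise Tarski clause, using that negation appears only in front of atoms; conjunction is immediate. For $\psi\lor\theta$: from left to right, split by $Y\dfn\{s\in X_i\mid\A\models_s\psi\}$ and $Z\dfn\{s\in X_i\mid\A\models_s\theta\}$, whence $Y\cup Z=X_i$ and the induction hypothesis gives $\A\models_Y\psi$, $\A\models_Z\theta$; from right to left, any split back yields $\A\models_s\psi$ or $\A\models_s\theta$ for each $s$ by the induction hypothesis. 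For $\forall x\psi$ one uses $X_i[A/x]=\{s(a/x)\mid s\in X_i,a\in A\}$ and the induction hypothesis to match the pointwise Tarski clause; for $\exists x\psi$ the lax clause is crucial: from left to right take the witnessing $F$ and apply the induction hypothesis, and from right to left define $F(s)\dfn\{a\in A\mid\A\models_{s(a/x)}\psi\}$, which is nonempty for every $s\in X_i$, hence a legal choice function with $\A\models_{X_i[F/x]}\psi$ by the induction hypothesis.

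The only genuinely delicate point — and the one I would flag as the main obstacle — is the existential case together with the degenerate bookkeeping: one must use that the semantics is \emph{lax} (so $F$ may return the entire nonempty set of witnesses), that $A$ is nonempty, and that when $X_i=\emptyset$ every clause is vacuously satisfied, matching the empty pointwise conjunction on the right. Everything else is a routine unwinding of Definition~\ref{def:semantics} that agrees clause by clause with the corresponding team-semantics and Tarski clauses.
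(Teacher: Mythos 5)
Your proof is correct and is precisely the routine structural induction that the paper leaves implicit: the paper offers no proof of this proposition, merely remarking that polyteam semantics restricted to single-sort $\FO(\tau)$ formulae coincides with ordinary team semantics and citing \cite{vaananen07} for the flatness of first-order logic, and your case analysis is the standard way to discharge both claims. The only blemish is cosmetic: in the disjunction case of the second equivalence your labels ``left to right'' and ``right to left'' are swapped relative to the displayed equivalence (the split $Y=\{s\mid\A\models_s\psi\}$ is the pointwise-to-team direction), though both directions are in fact proved.
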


\begin{example}\label{ex:2}
A relational database schema
\begin{align*}
\textsc{Patient}=&\{\texttt{patient\_id,patient\_name}\},\\
\textsc{Case}=&\{\texttt{case\_id,patient\_id,diagnosis\_id,confirmation}\},\\
\textsc{Test}=&\{\texttt{diagnosis\_id,test\_id}\},\\
\textsc{Results}=&\{\texttt{patient\_id,test\_id,result}\}
\end{align*}
stores information about patient cases and their related laboratory tests. In order to maintain consistency of the stored data, database management systems support the use of integrity constraints that are based on functional and inclusion dependencies. For instance,  on  relation schema $\textsc{Patient}$ the key $\texttt{patient\_id}$ (i.e. the dependence atom $\dep{\texttt{patient\_id,patient\_name}}$) ensures that no patient id can refer to two different patient names. 
On $\textsc{Case}$ the foreign key $\texttt{patient\_id}$ referring to $\texttt{patient\_id}$ on $\textsc{Patient}$ (i.e. the inclusion atom $\textsc{Case}[\texttt{patient\_id}]\sub \textsc{Patient}[\texttt{patient\_id}]$) enforces that patient ids on $\textsc{Case}$ refer to real patients. The introduction of poly-dependence logics opens up possibilities for more expressive data constraints. 
The poly-inclusion formula
\begin{align*}
\phi_0 \dfn &\texttt{confirmation}\neq \textit{positive} \hspace{1mm}\vee^{\textsc{Case}}\exists x_1x_2\big(x_1\neq x_2 \wedge\\
&\bigwedge_{i=1,2}(\textsc{Case}[\texttt{diagnosis\_id},x_i]\sub \textsc{Test}[\texttt{diagnosis\_id,test\_id}]\wedge\\
&\hspace{0mm}\textsc{Case}[\texttt{patient\_id},x_i,positive]\sub \textsc{Results}[\texttt{patient\_id,test\_id,result}])\big)
\end{align*}
ensures that a diagnosis may be confirmed only if it has been affirmed by two different appropriate tests. The poly-exclusion formula 
\begin{align*}\phi_1 \dfn &\texttt{confirmation}\neq \textit{negative} \hspace{1mm}\vee^{\textsc{Case}}\\
&\forall x\big(\exc{\textsc{Case}[\texttt{diagnosis\_id},x]}{ \textsc{Test}[\texttt{diagnosis\_id,test\_id}]}\vee^{\textsc{Case}}\\
&\hspace{4mm}\exc{\textsc{Case}[\texttt{patient\_id},x,positive]}{ \textsc{Results}[\texttt{patient\_id,test\_id,result}]}\big)
\end{align*}
makes sure that a diagnosis may obtain a negative confirmation only if it has no positive indication by any suitable test. Note that both formulae employ local disjunction and quantified variables that refer to $\textsc{Case}$. Interestingly, the illustrated expressive gain is still computationally feasible as both $\phi_0$ and $\phi_1$ can be enforced in polynomial time.  For $\phi_0$ note that the data complexity of poly-inclusion logic is in $\PTIME$ because this logic can be translated to fixed-point logic (see Theorem \ref{thm:gfp}). For $\phi_1$ observe that satisfaction of a formula of the form $\exc{\tuple x^1}{\tuple y^2}\vee^1 \exc{\tuple x^1}{\tuple z^3}$ can be decided in $\PTIME$ as well.

\end{example}

\noindent
\textbf{Poly-dependence logics.}
\emph{Poly-dependence}, \emph{poly-independence},  \emph{poly-inclusion}, and \emph{poly-exclusion logics} ($\PFO(\pdeps)$, $\PFO(\pinds)$, $\PFO(\pincs)$, and $\PFO(\pexcs)$, resp.) are obtained by extending $\PFO$ with poly-dependence, poly-independence, poly-inclusion, and poly-exclusion atoms, respectively. In general, given a set of atoms $\mathcal{C}$ we denote by $\PFO(\mathcal{C})$  the logic obtained by extending $\PFO$ with the atoms of $\mathcal{C}$. We also consider poly-atoms in the team semantics setting; by $\FO(\mathcal{C})$ we denote the extension of first-order logic by the poly-atoms in $\mathcal{C}$. Similarly, it is also possible to consider atoms of Section \ref{sec:teams}  in the polyteam setting by requiring that the variables used with each atom are of a single sort. For two logics $\calL$ and $\calL'$ in polyteam setting, we write $\calL\leq \calL'$ if for all $\phi\in \calL$ there is $\phi'\in \calL'$ such that $\A\models_{\pt X} \phi$ if and only if $\A\models_{\pt X} \phi'$, for all structures $\A$ and polyteams $\pt X$. We also write $\calL\equiv \calL'$ if $\calL\leq \calL'$ and $\calL'\leq \calL$. We define $``$$\leq$$"$ and $``$$\equiv$$"$ analogously for two logics in the team setting.

\subsection{Basic properties}

A polyteam $\pt X$ is called \emph{strictly non-empty}, if none of the teams $X_i$, $i\in \N$, is empty.
We say that a formula $\phi$ is \emph{local} in polyteam semantics if for all $\tuple V=(V_i)_{i\in\mathbb{N}}$ where $\ifr{i}{\phi}\sub V_i$ for $i\in \N$, and all structures $\A$ and polyteams $\pt X$, we have
\[\A\models_{\pt X} \phi \Leftrightarrow \A\models_{\pt X \upharpoonright \tuple V}\phi.\]
The truth value of a local formula depends only on the free variables at each coordinate, including those coordinates where the set of free variables is empty. 
 For instance, the formula $\exists x^1 (x^1\neq x^1)$ is true if and only if the first coordinate team is empty.
 The truth value of a local formula on a strictly non-empty polyteam depends only on the values of its free variables.
 We now call a logic $\calL$ \emph{local} if all its formulae are local. 
\begin{proposition}[Locality]\label{prop:locality}
For any set $\mathcal{C}$ of  
generalised poly-atoms $\PFO(\mathcal{C})$ is local.
\end{proposition}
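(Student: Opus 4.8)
The plan is to prove locality by structural induction on $\phi\in\PFO(\mathcal{C})$, showing the stronger bookkeeping statement that for every polyteam $\pt X$ and every $\tuple V=(V_i)_{i\in\mathbb N}$ with $\ifr{i}{\phi}\sub V_i$, we have $\A\models_{\pt X}\phi\Leftrightarrow\A\models_{\pt X\upharpoonright\tuple V}\phi$. I would first record two easy observations that make the induction go through cleanly: restriction commutes componentwise with all the team operations appearing in the semantics (union, subteam, and the supplementation maps $X_i\mapsto X_i[A/x]$ and $X_i\mapsto X_i[F/x]$ — the latter after composing $F$ with the projection $X_i\to X_i\upharpoonright V_i$), and restriction to any $V_i$ containing $\ifr{i}{\phi}$ is harmless because the truth condition of $\phi$ only inspects the values $s(z)$ for $z\in\ifr{i}{\phi}$.

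For the base cases: the first-order atoms $x=y$, $x\neq y$, $R(\vec x)$, $\neg R(\vec x)$ live in a single sort $i$ and only read the variables occurring in them, so restriction to any $V_i\supseteq\ifr{i}{\phi}$ preserves satisfaction; indeed this is already subsumed by Proposition~\ref{extension}. For a generalised poly-atom $A_Q(\tuple x_1,\ldots,\tuple x_n)$ with $\tuple x_m$ of sort $l_m$, the truth condition depends only on the relations $\rel{X_{l_m},(\tuple x_m)}$, and since every variable in $\tuple x_m$ lies in $\ifr{l_m}{\phi}\sub V_{l_m}$, we have $\rel{X_{l_m}\upharpoonright V_{l_m},(\tuple x_m)}=\rel{X_{l_m},(\tuple x_m)}$, so the atom's value is unchanged. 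For the inductive step, $\land$ is immediate from the induction hypothesis applied to the same $\tuple V$ (enlarging $V_i$ to contain the free variables of both conjuncts if necessary). For the global disjunction $\psi\lor\theta$: given a witnessing split $\pt X=\pt Y\cup\pt Z$ with $\pt Y,\pt Z\sub\pt X$, the restrictions $\pt Y\upharpoonright\tuple V$, $\pt Z\upharpoonright\tuple V$ witness satisfaction on $\pt X\upharpoonright\tuple V$ by the induction hypothesis; conversely, a split $\pt X\upharpoonright\tuple V=\pt Y'\cup\pt Z'$ is pulled back to a split of $\pt X$ via the surjection $X_i\to X_i\upharpoonright V_i$, and again the induction hypothesis applies. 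The local disjunction $\psi\lor^j\theta$ is the same argument carried out in the single component $X_j$. For the quantifier cases one uses the commutation of restriction with the supplementation operators noted above: for $\forall x\psi$ with $x\in\var i$, $(\pt X[X_i[A/x]/X_i])\upharpoonright\tuple V' = (\pt X\upharpoonright\tuple V)[\,(X_i\upharpoonright V_i)[A/x]/(X_i\upharpoonright V_i)\,]$ for $\tuple V'=(V_i\cup\{x\})$-componentwise, and similarly for $\exists x$ after transporting the choice function $F$ along the projection; then the induction hypothesis for $\psi$ (whose free variables of sort $i$ are contained in $V_i\cup\{x\}$) finishes the step.

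The one point requiring genuine care — and the step I expect to be the main obstacle — is the $\exists$ case in lax semantics, because the witnessing function $F\colon X_i\to\Po(A)\setminus\{\emptyset\}$ is defined on the \emph{un}restricted team, and going from $\pt X$ to $\pt X\upharpoonright\tuple V$ one must show that $F$ can be pushed forward to a well-defined function on $X_i\upharpoonright V_i$ without changing the resulting supplemented team. The resolution is that $X_i\upharpoonright V_i$ is the image of $X_i$ under the projection $\pi\colon X_i\to X_i\upharpoonright V_i$, so one defines $F'(t)=\bigcup_{s\in X_i,\ \pi(s)=t}F(s)$; then $(X_i\upharpoonright V_i)[F'/x]$ and $(X_i[F/x])\upharpoonright(V_i\cup\{x\})$ coincide as teams, and lax semantics (which tolerates arbitrary nonempty value sets) makes this union operation legitimate — this is exactly where laxness is used. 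Going the other direction, a function on the restricted team is precomposed with $\pi$. With this lemma in hand the quantifier step, and hence the whole induction, closes; the $\forall$ case is the special case $F\equiv A$ and needs no such care.
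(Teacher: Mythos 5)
Your proof is correct. The paper states Proposition~\ref{prop:locality} without giving a proof, and your argument is the standard structural induction the authors evidently have in mind: the base cases work because both the first-order literals and the generalised poly-atoms depend only on the relations $\rel{X_{l_i},(\tuple x_i)}$ induced by their free variables, and you correctly isolate the one genuinely delicate step, namely that under lax semantics the witness $F\colon X_i\to\Po(A)\setminus\{\emptyset\}$ for $\exists x$ can be pushed forward along the projection $X_i\to X_i\upharpoonright V_i$ by taking unions of value sets (which is precisely where laxness is needed, and where a strict-semantics version of the claim would break down).
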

Furthermore, the downward closure of dependence logic as well as the union closure of inclusion logic generalise to  polyteams.
\begin{proposition}[Downward Closure and Union Closure]\label{prop:closure}
Let $\phi$ be a formula of  $\PFO(\pdeps)$, $\psi$ a formula of  $\PFO(\pincs)$, $\A$ a model, and $\pt X,\pt Y$ two polyteams. Then
$\A\models_{\pt X}\phi$  and $\pt Y \sub \pt X$ implies that   $\A\models_{\pt Y}\phi$, and $\A\models_{\pt X} \psi$ and $\A\models_{\pt Y}\psi$ implies that $\A\models_{\pt X\cup \pt Y}\psi$.
\end{proposition}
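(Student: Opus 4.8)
The plan is to prove both closure properties simultaneously by induction on the structure of $\phi$ (resp.\ $\psi$), exactly mirroring the standard proofs of downward closure for dependence logic and union closure for inclusion logic, but tracking the indexing by sorts. For \textbf{downward closure}, suppose $\A \models_{\pt X} \phi$ and $\pt Y \sub \pt X$, i.e.\ $Y_i \sub X_i$ for all $i$. The literal cases ($x=y$, $x\neq y$, $R(\tuple x)$, $\neg R(\tuple x)$) are immediate since a universally quantified condition over $X_i$ descends to any subteam $Y_i$. Conjunction is trivial. For global disjunction $\psi \lor \theta$, take the witnessing split $\pt X = \pt U \cup \pt W$ with $\A\models_{\pt U}\psi$, $\A\models_{\pt W}\theta$; then $\pt Y = (\pt Y \cap \pt U)\cup(\pt Y\cap \pt W)$ is a split of $\pt Y$ into subteams of $\pt U$ and $\pt W$, and we apply the induction hypothesis. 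Local disjunction $\psi\lor^j\theta$ is the same argument restricted to the $j$th component. For $\forall x\psi$ with $x\in\var i$: if $Y_i\sub X_i$ then $Y_i[A/x]\sub X_i[A/x]$, so $\pt Y[Y_i[A/x]/Y_i] \sub \pt X[X_i[A/x]/X_i]$ and the induction hypothesis applies. For $\exists x\psi$ with $x\in\var i$: given the witness function $F\colon X_i\to\Po(A)\setminus\{\emptyset\}$, restrict it to $Y_i$; then $Y_i[F\!\upharpoonright\! Y_i/x]\sub X_i[F/x]$, so again the subteam relation is preserved and we invoke the induction hypothesis. Finally, for a poly-dependence atom $\pdep{\tuple x^i}{\tuple y^i}{\tuple u^j}{\tuple v^j}$: its satisfaction is a $\forall s\in X_i\forall s'\in X_j$ implication, which is preserved under shrinking both $X_i$ and $X_j$.

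For \textbf{union closure}, suppose $\A\models_{\pt X}\psi$ and $\A\models_{\pt Y}\psi$ for $\psi\in\PFO(\pincs)$; we show $\A\models_{\pt X\cup \pt Y}\psi$. The literal and conjunction cases are again straightforward, the key point for literals being that a universal condition holding separately over $X_i$ and over $Y_i$ holds over $X_i\cup Y_i$. For a poly-inclusion atom $\pinc{\tuple x^i}{\tuple y^j}$: if every $s\in X_i$ has a witness in $X_j$ and every $s\in Y_i$ has a witness in $Y_j$, then every $s\in X_i\cup Y_i$ has a witness in $X_j\cup Y_j$ (the witness lies in one of the two). For global disjunction $\alpha\lor\beta$: take splits $\pt X = \pt U_1\cup\pt W_1$ and $\pt Y = \pt U_2\cup\pt W_2$ with $\A\models_{\pt U_1}\alpha$, $\A\models_{\pt U_2}\alpha$, $\A\models_{\pt W_1}\beta$, $\A\models_{\pt W_2}\beta$; then $\pt U_1\cup\pt U_2$ and $\pt W_1\cup\pt W_2$ witness the split of $\pt X\cup\pt Y$, and we apply the induction hypothesis to $\alpha$ on $\pt U_1,\pt U_2$ and to $\beta$ on $\pt W_1,\pt W_2$. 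Local disjunction is analogous on the relevant component. For $\forall x\psi$: $(X_i\cup Y_i)[A/x] = X_i[A/x]\cup Y_i[A/x]$, so the induction hypothesis applies to the modified polyteams. For $\exists x\psi$ with witnesses $F\colon X_i\to\Po(A)\setminus\{\emptyset\}$ and $G\colon Y_i\to\Po(A)\setminus\{\emptyset\}$, define $H$ on $X_i\cup Y_i$ by $H(s)=F(s)\cup G(s)$ when $s$ lies in both, and $F(s)$ or $G(s)$ otherwise; then $(X_i\cup Y_i)[H/x] = X_i[F/x]\cup Y_i[G/x]$ and the induction hypothesis finishes the case.

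I expect the only mild subtlety — and hence the step deserving the most care — to be the $\exists x$ case in the union-closure direction, because one must define the merged witness function $H$ on the union team so that the resulting modified team is \emph{exactly} the union of the two separately modified teams (not merely a superset or subset); the case split on whether $s$ belongs to $X_i$, to $Y_i$, or to both must be handled cleanly so that $(X_i\cup Y_i)[H/x]=X_i[F/x]\cup Y_i[G/x]$ holds on the nose. Everything else is a routine structural induction that proceeds componentwise, and in fact both halves could be phrased as an instance of the familiar uni-team arguments applied sort by sort, the new ingredients being precisely the two poly-atom base cases and the local disjunction $\lor^j$, which are each handled by the observations above.
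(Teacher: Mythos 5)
Your proof is correct, and the paper in fact omits a proof of this proposition entirely, treating it as the routine sort-by-sort generalisation of the standard downward-closure and union-closure arguments for dependence and inclusion logic --- which is exactly the structural induction you carry out. Your identification of the existential case under lax semantics (merging the witness functions $F$ and $G$ so that $(X_i\cup Y_i)[H/x]=X_i[F/x]\cup Y_i[G/x]$ holds exactly) as the one delicate step is also the right call.
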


The following proposition shows that the replacement of independence (dependence) atoms with any (downwards closed) class of atoms definable in existential second-order logic ($\ESO$) results in no expressive gain, if the empty team is ignored. Note that dependence and independence
% Note that the scope of this result is limited to those atoms which are true for the empty team, as dependence and independence 
 logic formulae are always true for the empty team. We say that a formula $\phi$ over team semantics has the \emph{empty team property} if $\A\models_{\emptyset} \phi$ for all models $\A$, and a logic $\calL$ has the empty team property is all of its formulae have it.

\begin{proposition}\label{remarkprop}
Let $\mathcal{C}$ ($\mathcal{D}$, resp.) be the class of all (all downwards closed, resp.) $\ESO$-definable poly-atoms. With respect to non-empty teams, $\FO(\mathcal{C}) \equiv \FO(\inds)$, $\FO(\mathcal D) \equiv \FO(\deps)$. If the atoms in $\mathcal{C}$ and $\mathcal{D}$ have the empty team property, the restriction to non-empty teams can be lifted. Moreover $\FO(\pincs) \equiv \FO(\incs)$.
\end{proposition}
\begin{proof}
The claim $\FO(\pincs) \equiv \FO(\incs)$ follows directly from the observation that in the team semantics setting poly-inclusion atoms are exactly inclusion atoms. Note that  $\FO(\inds)$ ($\FO(\deps)$, resp.) captures all (all  downwards closed, resp.) $\ESO$-definable properties of teams which include the empty team (see Theorem \ref{thm:earlier}). It is easy to show (cf. \cite[Lemma 5]{kokuvi16}) that every property of teams definable in $\FO(\mathcal{C})$ ($\FO(\mathcal{D})$, resp.) is $\ESO$-definable ($\ESO$-definable and downwards closed, resp.).
Thus since $\inds\in\mathcal{C}$ and $\deps\in \mathcal{D}$, we obtain that $\FO(\mathcal{C}) \equiv \FO(\inds)$ and $\FO(\mathcal{D}) \equiv \FO(\deps)$ with respect to non-empty teams. Finally note that if all atoms in $\mathcal{C}$  and $\mathcal{D}$ have the empty team property, the logics  $\FO(\mathcal{C})$ and $\FO(\mathcal{D})$ have it as well. \qedd
\end{proof}

\begin{remark}\label{uniremark}
In particular it follows from the previous proposition that, in the polyteam setting, each occurrence of any (any downwards closed, resp.) $\ESO$-definable poly-atom (with the empty team property) that takes variables of a single sort as parameters may be equivalently expressed by a formula of $\PFO(\inds)$ ($\PFO(\deps)$, resp.) that only uses variables of the same single sort.
\end{remark}

We end this section by considering the relationship of global and local disjunctions. In particular, we  observe that either one of the two disjunctions could be omitted from $\PFO$ without influencing the expressivity of the logic.
To facilitate our construction, we here allow the use of disjunctions of type $\vee^I$, where $I$ is a set of indices, with the obvious semantics. We then show that $\lor$ can be replaced by $\lor^I$ and  $\lor^I$ by  $\lor^i$.

\begin{proposition}\label{onedisjunction}
For every formula $\phi$ of $\PFO$ there exists an equivalent formula $\psi_l$ ($\psi_g$, resp.) of $\PFO$ in which all disjunctions are local (global, resp.).
\end{proposition}
\begin{proof}
\textbf{Local case.}
Let $\phi$ be a formula of  $\PFO$ and let $I$ list the sorts of all the variables that occur in $\phi$. First, we let $\phi'$ denote the formula obtained from $\phi$ by substituting all occurrences of $\lor$ by $\lor^I$. It is a direct consequence of the locality property that $\phi$ and $\phi'$ are equivalent.

We will next show how to eliminate disjunctions of type $\lor^I$. Without loss of generality, we restrict to models of cardinality at least two.
Let $\phi = \psi \lor^I \theta$ be a formula of $\PFO$ and let $I=\{i_1,\dots,i_n\}$. Define
\[
\phi^+ := \bigexists_{i\in I} x^i y^i  (\xi_l \land \xi_r),
\]
 where, for each $i\in I$, the variables $x^i$ and $y^i$ are fresh and distinct, and
\begin{multline*}
\xi_l:=  (x^{i_1}= y^{i_1} \lor^{i_1} (x^{i_1} \neq y^{i_1} \land (  x^{i_2} = y^{i_2} \lor^{i_2} ( x^{i_2} \neq y^{i_2}\\
\shoveright{\land (\dots \land ( x^{i_n} = y^{i_n} \lor^{i_n} ( x^{i_n} \neq y^{i_n} \land \psi)\dots),}\\
\shoveleft{\xi_r:=  (x^{i_1} \neq y^{i_1} \lor^{i_1} ( x^{i_1} = y^{i_1} \land (x^{i_2} \neq y^{i_2} \lor^{i_2} (x^{i_2} = y^{i_2}} \\
 \land (\dots \land (x^{i_n} \neq y^{i_n} \lor^{i_n} ( x^{i_n} = y^{i_n} \land \theta )\dots).
\end{multline*}
The idea above is that the variables $x^{i_j}$, $y^{i_j}$ are used to encode a (possibly overlapping) split of the team $X_j$. Using locality it is easy to see that $\phi$  and $\phi^+$ are equivalent over structures of cardinality at least two. From this the claim follows in a straightforward manner.

\noindent
\textbf{Global case.}
We show how a single local disjunction is eliminated. From this the result follows. Again, without loss of generality, we restrict to structures of cardinality at least two. Let $\phi = \psi\lor^j \theta$ be a formula of  $\PFO$ and let $I$ list the sorts of all the variables that occur in $\phi$ except $j$. Define $\phi^*$ as
\[
\bigforall_{i\in I} x^i y^i \big((\psi \land \bigwedge_{i\in I} x^i=y^i ) \lor(\theta \land \bigwedge_{i\in I} x^i\neq y^i ) \big),
\]
where, for each $i\in I$, the variables $x^i$ and $y^i$ are fresh and distinct. The idea in $\phi^*$ is that on structures with at least two elements, the evaluation of the universal quantifiers $\forall x^i y^i$ on a polyteam $\vec{X}$ duplicates each assignment in $X_i$ in at least two ways: in some duplicates $x^i=y^i$ holds and in others $x^i\neq y^i$ holds. It then follows from locality that $\phi^*$ and $\phi$ are equivalent.
\qedd
\end{proof}

We wish to point out that, in contrast to the previous result, it is easy to see that there is no single formula in $\PFO$ without local (global, reps.) disjunction that defines global (local, resp.) disjunction.
 Moreover it's worth noticing that the translations produced above are quite involved and rely on the use of quantifiers. It is easy to define natural fragments of $\PFO$ where the disjunctions cannot be expressed with another.

\subsection{Data exchange in the polyteam setting}
As promised, we now return to the topic of modelling data exchange in our new setting. In this section we restrict our attention to poly-atoms that are embedded dependencies. Our first goal is to define the notions of \emph{source-to-target} and \emph{target} poly-atoms. For this purpose we define a normal form for embedded dependencies.
We call an embedded dependency $\forall \vec{x} \big (\phi(\vec{x}) \rightarrow \exists \vec{y} \psi(\vec{x},\vec{y})\big )$ \emph{separated} if the relation symbols that occur in $\phi$ and $\psi$ are distinct. A poly-atom is called \emph{separated}, if the defining formula is a separated embedded dependency. In the polyteam setting this is just a technical restriction as non-separated poly-atoms can be always simulated by separated ones.
Below we use the syntax $A(\vec{x}_1,\dots, \vec{x}_l, \vec{y}_1,\dots, \vec{y}_k)$ for separated poly-atoms. The idea is that $\vec{x}_i$s project extensions for relations used in the antecedent and $\vec{y}_j$s in the consequent of the defining formula.

Let $\mathcal S$ and $\mathcal T$ be a set of source relations and target relations from some data exchange instance, respectively.   Let $\tuple X=(S_1, \dots S_n, T_1,\dots,T_m)$ be a polyteam that encodes $\mathcal S$ and $\mathcal T$ in the obvious manner. We say that an instance of a separated atom $A(\vec{x}_1,\dots, \vec{x}_l, \vec{y}_1,\dots, \vec{y}_k)$ is \emph{source-to-target} if each  $\vec{x}_i$ is a tuple of variables of the sort of $S_j$, for some $j$,  and each $\vec{y}_i$ is a tuple of variables of the sort of $T_j$, for some $j$.  Analogously the instance $A(\vec{x}_1,\dots, \vec{x}_l, \vec{y}_1,\dots, \vec{y}_k)$ is \emph{target} if  each $\vec{x}_i$ and $\vec{y}_j$ is a tuple of variables of the sort of $T_p$ for some $p$.

Data exchange problems can now be directly studied in the polyteam setting. For example the \emph{existence-of-solution} problem can be reduced to a model checking problem by using first-order quantifiers to \emph{guess} a solution for the problem while the rest of the formula describes the dependences required to be fulfilled in the data exchange problem. 

\begin{example}\label{ex:3}
A relational database schemas 
\begin{align*}
&\mathcal S:\quad \textsc{P(rojects)}=\{\texttt{name, employee, employee\_position}\},\\
&\mathcal T:\quad \textsc{E(mployees)}=\{\texttt{name, project\_1, project\_2}\}
\end{align*}
are used to store information about employees positions in different projects. We wish to check whether for a given instance of the schema $\mathcal S$ there exists an instance of the schema $\mathcal T$ that does not lose any information about for which projects employees are tasked to work and that uses the attribute {\texttt{name}} as a key. The $\PFO(\pincs, \deps)$-formula
\begin{multline*}
   \phi := \exists x_1 \exists x_2 \exists x_3  
     \Big(\big( \textsc{P}[\texttt{employee, name}] \subseteq \textsc{E}[x_1,x_2] \\
     \lor^\textsc{P}
    \textsc{P}[\texttt{employee, name}] \subseteq \textsc{E}[{x_1, x_3}] \big) \land
    \dep{x_1,(x_2, x_3)} \Big),
\end{multline*}
when evaluated on a polyteam that encodes an instance of the schema $\mathcal S$, expresses that a solution for the data exchange problem exists. The variables $x_1$, $x_2$, and $x_3$ above are of sort \textsc{E} and are used to encode attribute names \texttt{name, project\_1}, and  \texttt{project\_2}, respectively. The dependence atom above enforces that the attribute {\texttt{name}} is a key.
\end{example}

\section{Relationship between  polyteam and team semantics}\label{sect:rel}
 Before embarking on an analysis of the expressive power of polyteam logics, we consider the relationship between polyteam and team semantics. One might ask is polyteam semantics really necessary in order to model the polyrelational case. Is it not possible to embed and interpret polyteams within team semantics? We will next shed some light on this question. Independence logic $\FO(\inds)$ is as expressive as existential second-order logic $\ESO$ when the team is encoded as a relation \cite{galliani12}. Moreover it is relatively clear that $\PFO(\pinds)$ translates to $\ESO$ when the polyteam is encoded as a tuple of relations. Furthermore in $\ESO$ multiple relations can be encoded into a single relation of large enough arity.  Theorem \ref{cor:polytouni}  establishes that, analogously to the case with $\ESO$, $\PFO(\pinds)$ can also be simulated using $\FO(\inds)$. On the other hand, such a result does not seem likely to hold  for  $\PFO(\pdeps)$ and $\FO(\deps)$. It seems that there is no suitable way to encode multiple relations into a single relation that is also compatible with downward closure; it is known that $\FO(\deps)$ characterises downward closed $\ESO$-definable properties of teams \cite{kontinenv09}.
%Before embarking on an analysis of the expressive power of polyteam logics, we consider the relationship between polyteam and team semantics. One might ask is polyteam semantics really necessary in order to model the polyrelational case. Is it not possible to embed and interpret polyteams within team semantics? We will next shed some light to this question.
%Independence logic $\FO(\inds)$ is as expressive as existential second-order logic $\ESO$ when the team is encoded as a relation \cite{galliani12}. Moreover it is a priori clear that $\PFO(\pinds)$ translates to $\ESO$ when the polyteam is encoded as a tuple of relations. Furthermore in $\ESO$ multiple relations can be encoded and interpreted within one big relation of large enough arity. 
%Theorem \ref{cor:polytouni} confirms our intuition and establishes that $\PFO(\pinds)$ can indeed be simulated using $\FO(\inds)$.
%On the other hand, our intuition is that the same is not true for  $\PFO(\pdeps)$ and $\FO(\deps)$. It seems that there is no suitable way to encode multiple relations into a single relation that is also compatible with downward closure; it is known that $\FO(\deps)$ characterises downward closed $\ESO$-definable properties of teams \cite{kontinenv09}.
Section \ref{sect:rep} lays out the groundwork for relating polyteam semantics to team semantics. Our results will be shown in Section \ref{sect:translation}.

\subsection{Team representation}\label{sect:rep}
First let us formulate precisely how polyteams can be represented by teams. 
Let $X$ be a non-empty team whose domain takes variables from $\var{1},\ldots ,\var{n}$. Then $X$ \emph{represents} the polyteam $\pt X=(X_1, \ldots ,X_n)$ where $X_i=X\upharpoonright \var{i}$.
 Note that any strictly non-empty polyteam $(X_1, \ldots ,X_n)$, where $\bigcup\Dom{X_i}$ is finite, can be represented by a team.
 In order to deal with the situations where some coordinate team $X_i$ of $\pt X$ is empty, we next introduce the concepts of \emph{contraction} and \emph{emptification}.  
\begin{definition}[Contraction and emptification]
	Let $Q$ be a generalised quantifier of type $(j_1,\dots,j_n)$. The generalised quantifier $Q'$ of type $(j_1,\dots, j_{i-1}, j_{i+1},j_n)$ is the \emph{$i$-contraction} of $Q$ if the equivalence
	\[
	 (A,R_1,\dots, R_{i-1}, \emptyset ,R_{i+1} ,R_n)\in Q \Leftrightarrow (A,R_1,\dots, R_{i-1},R_{i+1} ,R_n)\in Q'
	\]
	holds for every $A, R_1,\dots, R_n$.  For $I\subseteq \{1,\dots,n\}$,  the \emph{$I$-contraction} of $Q$ is defined analogously. We say that a class of generalised poly-atoms $\calC$ is \emph{closed under contraction} if, for any 
	poly-atom $\alpha_Q\in \calC$ and any index $i$,  $A_{Q'}\in\calC$, where $Q'$ is the $i$-contraction of $Q$.
	
	Let $\phi = \alpha_Q(\tuple x^{j_1}_1, \ldots ,\tuple x^{j_n}_n)$ be a poly-atom, $i\in\N$ a natural number, $I=\{k \mid 1\leq k \leq n, j_k=i\}$ a set of indices, 
	 and $Q'$ the \emph{$I$-contraction} of $Q$. Furthermore let $\vec{\chi}$ list the variable tuples in $\tuple x^{j_1}_1, \ldots,\tuple x^{j_n}_n$ that are not of sort $i$. We let $\emp{\phi}{i}$ denote the poly-atom $\alpha_{Q'}(\vec \chi)$, and call it the \emph{$i$-emptification} of $\phi$.
 \end{definition}
 For instance, the $1$-emptification of the dependence atom $\dep{x^1,y^1}$ is equivalent to $\top$.
 It is easy to see that in polyteam setting, the $i$-emptification of $\alpha(\vec{x})$ can be equivalently expressed with a $\PFO(\alpha)$-formula. However the same is not, in general, true in the team semantics setting.
 
 Next we show how statements on polyteams that have empty coordinate teams can be transformed to equivalent statements over strictly non-empty polyteams.

\begin{definition}[Emptification of complex formulae]
	Let $\calC$ be a set of poly-atoms, 
	$\phi$ a formula of $\PFO(\calC)$, and $i\in\N$.
	We denote by $\emp{\phi}{i}$ the formula obtained from $\phi$ by simultaneously replacing all first-order literals of sort $i$ with $\top$, all disjunctions $\lor^i$ with $\land$, and all poly-atoms $\alpha$ with $\emp{\alpha}{i}$, and by eliminating all quantifiers of sort $i$.
	We call $\emp{\phi}{i}$ the \emph{$i$-emptification of} $\phi$.
	Note that $\emp{\phi}{i}$ is free of variables and local disjuctions of sort $i$.
 \end{definition}	
  The following proposition can now be shown by straightforward structural induction.
 \begin{proposition}\label{prop:empty}
 Let $\phi\in \PFO(\calC)$, where $\calC$ is any set of poly-atoms. Then for all models $\A$ and polyteams $\pt X$, 
 \[
 \A\models_{\pt X} \emp{\phi}{i}  \iff \A\models_{\pt X_{i=\emptyset}} \phi ,
 \]
where $\pt X_{i=\emptyset}$ is obtained from $\pt X$ by substituting the empty team for $X_i$.
 \end{proposition}

Next we introduce a trick that enables us to change a representation of a polyteam on the fly.
\begin{lemma}\label{lem:er}
	Let $\calC$ be a set of poly-atoms and $X$ a  non-empty team representing $\pt X=(X_1,\dots,X_n)$.  For every $\phi\in\FO(\calC,\incs)$ there exists a formula $\phi^{\exists r}\in\FO(\calC,\incs)$ such that
	\[
	\A \models_X \phi^{\exists r} \Leftrightarrow \A \models_{Y} \phi, \text{ for some team representation $Y$ of $\pt X$.}
	\]
\end{lemma}
\begin{proof}
	Let $\phi$, $\pt{X}$, and $X$ be as described above. For each $1\leq i \leq n$, let $\tuple x^i = x^i_1,\dots,x^i_k$ be the variable domain of $X_i$ and let $\vec y^i = y^i_1,\dots,y^i_k$ be fresh and distinct variables. Let $\tuple x$ and $\tuple y$ denote  $\tuple x^1,\dots,\tuple x^n$ and $\tuple y^1,\dots,\tuple y^n$, respectively.  Define
	\[
	\phi^{\exists r} \dfn \exists \vec{y} \Big(\bigwedge_{1\leq i \leq n}  (\vec{x}^i\subseteq \vec{y}^i \land \vec{y}^i\subseteq \vec{x}^i) \land \exists \vec{x} \big( \vec{x}\subseteq \vec{y} \land \vec{y}\subseteq \vec{x} \land \phi \big) \Big).
	\]
	It is easy to check, using locality, that $\phi^{\exists r}$ defined as above satisfies the claim of the lemma.\qedd
\end{proof}

\subsection{Translation to team semantics}\label{sect:translation}

Using the concepts and results from the previous section we will now move to translations from polyteam semantics to team semantics. The next theorem reveals that   
polyteam semantics can be simulated with team semantics 
 using inclusion atoms and \emph{classical disjunction} $\cvee$:
\[\A\models_X\phi\cvee\psi \text{ iff }\A\models_X\phi\text{ or }\A\models_X\psi.\]

\begin{theorem}\label{thm:polytouni2}
 Let $\phi$ be a formula in $\PFO(\mathcal{C})$, where $\calC$ is a contraction closed set of poly-atoms. Then there is a formula $\phi^*\in \FO(\mathcal{C},\incs,\cvees)$ such that for any structure $\A$, with at least two elements, 
 and any non-empty team $X$ representing $\pt X=(X_1, \ldots ,X_n)$, 
\[
\A\models_{\pt X} \phi \Leftrightarrow  \A^*\models_{X} \phi^*,
\]
where $\A^*$ is an expansion of $\A$ with two distinct constants.
\end{theorem}
\begin{proof}
Define a rank $\rr$ of a formula $\phi$ as follows. For a poly-atom $\alpha(\tuple x_1^{i_1}, \ldots ,\tuple x_n^{i_n})$, $\rr(\alpha) \dfn n$. For first-order literals $\alpha$, $\rr(\alpha) \dfn 1$.  For complex formulae, $\rr(Q\psi) \dfn \rr(\psi)+1$ and $\rr(\psi C \theta) \dfn \rr(\psi)+\rr(\theta)+1$, where $Q\in\{\exists,\forall\}$ and $C\in \{\wedge,\vee,\vee^i\}$. We prove the claim by induction on $\rr(\phi)$ for a mapping $\phi\mapsto \phi^*$ defined recursively. This mapping is the identity on literals,
 and homomorphism on conjunction and universal quantification; proving the induction claim is straightforward for these cases and thus omitted. By Proposition \ref{onedisjunction} we can also exclude the case of global disjunction. Thus it suffices to consider local disjunction and existential quantification.

\noindent
\textbf{Local disjunction.} Suppose $\phi=\psi \vee^i \theta$. We define 
\begin{equation}\label{eq:kaava}
\phi^* \dfn ((\emp{\psi}{i})^*\wedge \theta^*) \cvee(\psi^*\wedge (\emp{\theta}{i})^*)\cvee \Phi^{\exists r},
\end{equation}
where ${}^{\exists r}$ is as defined in Lemma \ref{lem:er} and
\begin{equation}\label{eq:isophi}
\Phi \dfn \exists z \,\Big( \vec{x}0 \subseteq \vec{x} z \land \vec{x} 1 \subseteq \vec{x} z \wedge  \big((\psi^* \wedge z=0) \vee (\theta^*\wedge z=1)\big)\Big),
\end{equation}
 where $z$ is a fresh variable and
  $\tuple x$ %
   lists all the free variables of $\phi$ which are not of the sort $i$. 
  Furthermore, $0$ and $1$ are the two distinct constants, and $\tuple x0 \sub \tuple yz$ and $\tuple x0 \sub \tuple yz$ are shorthands for $\exists v (v=0 \wedge \tuple xv \sub \tuple yz)$ and $\exists v (v=1 \wedge \tuple xv \sub \tuple yz)$, respectively.

 Assume first that $\A\models_{\pt X} \phi$. Then there are $Y_i\cup Z_i=X_i$ such that $\A\models_{\pt Y} \psi$  and $\A\models_{\pt Z} \theta$, where $\tuple Y$ and $\tuple Z$ are obtained from $\tuple X$ by substituting respectively $Y_i$ and $Z_i$ for $X_i$. Suppose $Y_i$ is empty. Then by Proposition \ref{prop:empty} $\A\models_{\pt X} \emp{\psi}{i}$. Thus by induction hypothesis
  $\A\models_{X} (\emp{\psi}{i})^*$. Furthermore $\A\models_{X}\theta^*$ follows from $\A\models_{\pt Z} \theta$ by the induction hypothesis, for $\pt Z = \pt X$ since $Y_i=\emptyset$. Thus the first disjunct of $\phi^*$ holds. If $Z_i$ is empty, we similarly obtain that the second disjunct holds.  
  
 Suppose then that both $Y_i$ and $Z_i$ are non-empty.
 Let $X^*$ be a representation of $\pt X$ obtained by taking the Cartesian product of the teams $X_1,\dots,X_n$.
 For $s\in X^*$, define
 \[G(s) \dfn \begin{cases}
 \{0,1\} &\text{ if $s\upharpoonright \var{i}\in Y_i\cap Z_i$}\\
 \{0\}&\text{ if $s\upharpoonright \var{i} \in Y_i\setminus Z_i$}\\
 \{1\}&\text{ if $s\upharpoonright \var{i}\in Z_i\setminus Y_i$}.
 \end{cases}
\]
 Clearly, the two left-most conjuncts in \eqref{eq:isophi} are satisfied by $X^*[G/z]$.  Recall that  $\tuple Y$ and $\tuple Z$ were obtained from $\tuple X$ by substituting respectively $Y_i$ and $Z_i$ for $X_i$.
 Let $Y'$ and $Z'$ consist of those assignments of $X^*[G/z]$ whose restriction to $\var{i}$ belongs to $Y_i$ and $Z_i$, respectively.
 Note that if the variable $z$ is disregarded from $Y'$ and $Z'$, team representations of $\pt Y$ and $\pt Z$ are obtained, respectively.
 Now
 by induction hypothesis, locality (Proposition \ref{prop:locality}), and the selection of $G$, 
 \begin{equation}\label{eq:iff}
 \A\models_{\pt Y} \psi \,\Leftrightarrow\, \A \models_{Y'} \psi^* \,\Leftrightarrow\, \A \models_{Y'} \psi^* \land z=0.
 \end{equation}
 Similarly we obtain that $\A\models_{Z'} \theta^* \land z=1$. Since $Y'\cup Z'= X^*[G/z]$, we conclude that $\A\models_{X^*} \Phi$. Finally, by Lemma \ref{lem:er}, we obtain that $\A\models_{X} \Phi^{\exists r}$ and hence that $\A\models_{X} \phi^*$.

 For the converse direction, suppose $\A\models_{X} \phi^*$. Assume first that $\A\models_{X} (\emp{\psi}{i})^*\wedge \theta^*$. By the induction hypothesis, it follows that $\A\models_{\pt X} \theta$ and $\A\models_{\pt X} \emp{\psi}{i}$. By Proposition \ref{prop:empty} together with the semantics of local disjunctions, it follows that  $\A\models_{\pt X} \phi$. Similarly,  $\A\models_{\pt X} \phi$ if $\A\models_{X} (\emp{\theta}{i})^*\wedge \psi^*$.
Assume then that $\A\models_{X} \Phi^{\exists r}$. Then $\A\models_{Y} \Phi$ for some team $Y$ representing $\pt X$. Let $F\colon Y\to \Po(A)\setminus \{\emptyset\}$ be a mapping such that $Y[F/z]$ satisfies the quantifier-free part of \eqref{eq:isophi}, and let $U$ and $V$ consist of those assignments of $Y[F/z]$ that map $z$ to $0$ and $1$, respectively. The disjunction in \eqref{eq:isophi} now guarantees that $U\cup V= Y[F/z]$, $\A\models_U\psi^*$, and $\A\models_V\theta^*$. By the induction hypothesis, $\A\models_{\pt U} \psi$ and $\A\models_{\pt V} \theta$, where $\pt U=(U_1, \ldots ,U_n)$ and $\pt V=(V_1, \ldots ,V_n)$ are the polyteams represented by $U$ and $V$, respectively. Since $X$ and $Y$ represent the same polyteam, and $Y(\tuple x)=U(\tuple x)=V(\tuple x)$ by the inclusion atoms in \eqref{eq:isophi}, we obtain that $U_j=V_j=X_j$ for $j\neq i$. Moreover, $U_i\cup V_i= Y_i= X_i$, and thus we obtain that $\A\models_{\pt X} \psi \vee^i \theta$.

\noindent
\textbf{Existential quantification.} Suppose $\phi =\exists x \psi$ where $x\in \var{i}$. We define $\phi^* \dfn \exists x \psi^*$. Suppose first $\A\models_{\pt X}\phi$. Then $\A\models_{\pt X[X_i[F_i/x]/X_i]} \psi$ for some $F_i\colon X_i\to \Po(A)\setminus \{\emptyset\}$. If $F\colon X\to \Po(A)\setminus \{\emptyset\}$ is defined as $F(s) \dfn F_i(s\upharpoonright \var{i})$, it follows that the team $X[F/x]$ represents the polyteam $\pt X[X_i[F_i/x]/X_i]$.Hence by induction hypothesis $\A\models_{X[F/x]} \psi^*$, and thus $\A\models_X \phi^*$. 

For the converse direction, suppose $\A\models_X \phi^*$. Then we find $F\colon X\to \Po(A)\setminus \{\emptyset\}$ such that $\A\models_{X[F/x]} \psi^*$. Setting $F_i\colon X_i\to \Po(A)\setminus \{\emptyset\}$ as $F_i(s)\dfn\bigcup\{F(s')\mid s'\in X,s'\upharpoonright \var{i}=s\}$, it follows that $\pt X[X_i[F_i/x]/X_i]$ is represented by $X[F/x]$. Hence, we obtain by the induction hypothesis, that $\A\models_{\pt X[X_i[F_i/x]/X_i]} \psi$. Thus $\A\models_{\pt X}\phi$. \qedd
\end{proof}

 For poly-inclusion logic, the following corollary now follows immediately.

\begin{corollary}\label{corinc}
Let $\phi$ be a formula in $\PFO(\pincs)$. Then there is a formula $\phi^*\in \FO(\incs,\cvees)$ such that, for any structure $\A$ with at least two elements, and any non-empty team $X$ representing $\pt X$,
\[\A\models_{\pt X} \phi \Leftrightarrow  \A^*\models_{X} \phi^*,\]
where $\A^*$ is an expansion of $\A$ with two distinct constants.
\end{corollary}
\begin{proof}
By the previous theorem $\phi^*$ can be found from $\FO(\calC,\incs,\cvees)$, where $\calC$ is the closure of poly-inclusion atoms under contraction. Since poly-inclusion atoms are just inclusion atoms in team semantics, and the closure only adds poly-atoms equivalent to $\top$ or $\bot$, we obtain that $\phi^* \in \FO(\incs,\cvees)$. \qedd
\end{proof}

Next we show that the polyteam logics defined in terms of $\ESO$-definable poly-atoms can be represented in independence logic. First, we observe that $\ESO$-definable poly-atoms are closed under contraction.
\begin{proposition}\label{prop:closure}
The set of $\ESO$-definable poly-atoms is closed under contraction.
\end{proposition}
\begin{proof}
Let $\alpha_Q$ be a generalised $\ESO$-definable poly-atom of type $(j_1, \ldots ,j_n)$, and let $i\in \{1, \ldots ,n\}$. We show that the $i$-contraction of $\alpha_Q$ is definable in $\ESO$. Without loss of generality $i=n$, in which case the $i$-contraction of $\alpha_Q$, written $\alpha_{Q'}$, is of type of $(j_1, \ldots ,j_{n-1})$.
Let $\phi(R_1, \ldots ,R_{n})$ be an $\ESO$ formula which defines $\alpha_Q$, and let $\phi'(R_1, \ldots ,R_{n-1})$ be obtained from $\phi$ by replacing all relational atoms of the form $R_n(\tuple t)$ with $\bot$. Then for all models $\A=(A,R^{\A}_1, \ldots ,R^{\A}_{n-1})$,
\begin{multline*}
(A,R^{\A}_1, \ldots ,R^{\A}_{n-1})\in Q' \,\Leftrightarrow\, (A,R^{\A}_1, \ldots ,R^{\A}_{n-1},\emptyset )\in Q\\ \,\Leftrightarrow\, (A,R^{\A}_1, \ldots ,R^{\A}_{n-1},\emptyset )\models \phi
\,\Leftrightarrow\, (A,R^{\A}_1, \ldots ,R^{\A}_{n-1})\models\phi'.
\end{multline*}\qedd
\end{proof}
Second, we use the fact that  $\FO(\inds)$ characterises all $\ESO$-definable team properties.  
Note that $\rel{X}$ refers to a relation $\{s(x_1, \ldots, x_n)\mid s\in X\}$ where $x_1, \ldots ,x_n$ is some enumeration of $\Dom{X}$. 
\begin{theorem}[\cite{galliani12,kontinenv09}]\label{thm:earlier}
Let $\phi(\tuple x)$ be an $\FO(\inds)$ ($\FO(\deps)$, resp.) formula, and let $R$ be an $|\tuple x|$-ary relation.
Then there is an (downwards closed with respect to $R$, resp.) $\ESO$-sentence $\psi(R)$ such that for all teams $X\neq \emptyset$ where $\Dom{X}=\tuple x$,
\begin{equation*}
\A \models_X \phi(\tuple x) \Leftrightarrow  (\A, R:=\rel{X})\models \psi(R)
\end{equation*}
The same statement holds also vice versa.
\end{theorem}
In fact, in the above theorem we may substitute all $\ESO$-definable poly-atoms and classical disjunction for independence atoms.

\begin{theorem}\label{cor:polytouni}
Let $\calC$ be a set of $\ESO$-definable poly-atoms, and let $\phi$ be a formula in $\PFO(\mathcal{C})$. Then there is a formula $\phi^*\in \FO(\inds)$ such that for any structure $\A$ and any non-empty team $X$ representing $\pt X$,
\[
\A\models_{\pt X} \phi \Leftrightarrow  \A\models_{X} \phi^*.
\]
\end{theorem}
\begin{proof}
	We may assume that $\calC$ is the set of all $\ESO$-definable poly-atoms. Let $\phi$ be an arbitrary formula of $\PFO(\mathcal{C})$.
	Without loss of generality, it suffices to prove the above equivalence with respect to structures that have at least two elements.
	By Theorem \ref{thm:polytouni2} and Proposition \ref{prop:closure}, we find a formula $\phi' \in \FO(\mathcal{C},\incs,\cvees)$ such that, for every structure $\A$, with at least two elements, and team $X$ that represents some non-empty polyteam $\vec{X}$ of $\phi$, it holds that
	\[
	\A \models_{\pt X} \phi \Leftrightarrow \A' \models_{X} \phi',
	\]
	where $\A'$ is an expansion of $\A$ with two distinct constants $0$ and $1$. Define the formula $\phi''\dfn \exists x y \big( \dep{x} \land \dep{y} \land x\neq y \land \phi'(x/0, y/1)\big)$, where $x$ and $y$ are fresh distinct variables that do not occur in $\phi'$, and $\phi'(x/0, y/1)$ is obtained from $\phi'$ by replacing $0$ and $1$ by $x$ and $y$, respectively. Clearly
	\[
	\A' \models_{X} \phi'  \Leftrightarrow \A \models_X \phi''.
	\]
	Note that $\phi'' \in \FO(\mathcal{C},\cvees)$, for $\deps$ and $\incs$ are  $\ESO$-definable poly-atoms. Also, when restricted to non-empty teams, $\FO(\mathcal{C},\cvees)\leq \FO(\inds)$, for $\FO(\mathcal{C})\equiv \FO(\inds)$ by Proposition \ref{remarkprop}, and $\FO(\inds)$ is closed under classical disjunction because it inherits this property from $\ESO$ by Theorem \ref{thm:earlier}.
	Thus
	 we conclude that, by taking a formula $\phi^*$ of $\FO(\inds)$ which is equivalent to $\phi''$ with respect to non-empty teams, we obtain
	  the required team representation of $\phi$. \qedd 
\end{proof}

As an immediate corollary we obtain that poly-independence logic is representable in independence logic.
\begin{corollary}
Let $\phi$ be a formula in $\PFO(\pinds)$. Then there is a formula $\phi^*\in \FO(\inds)$ such that for any structure $\A$ and any non-empty team $X$ representing $\pt X$,
\[\A\models_{\pt X} \phi \Leftrightarrow  \A\models_{X} \phi^*.\]
\end{corollary}

\section{Expressiveness}\label{sec:expressivity}
The expressiveness properties of dependence, independence, inclusion, and exclusion logic and their fragments 
enjoy already comprehensive classifications. Dependence logic and exclusion logic are equi-expressive and capture all downwards closed $\ESO$ properties of teams \cite{galliani12,kontinenv09}. Independence logic, whose independence atoms violate downward closure, in turn captures all  $\ESO$ team properties \cite{galliani12}. On the other hand,  the expressivity of inclusion logic has been characterised by the so-called greatest fixed point logic \cite{gallhella13}.
In this section we turn attention to  polyteams and consider the expressivity of the poly-dependence logics introduced in this paper.
 Section \ref{sect:exp1} deals with logics with only uni-dependencies whereas in Section \ref{sect:exp2} poly-dependencies are considered.

\subsection{Uni-dependencies in polyteam semantics}\label{sect:exp1}

First we turn attention to uni-atoms in polyteam semantics. We show that with uni-atoms no interaction between different relations is possible.

\begin{theorem}\label{thm:conj}
Let $\calC$ be a set of  uni-atoms. Each formula $\phi(\tuple x^1, \ldots ,\tuple x^n)\in \PFO(\mathcal{C})$ can be associated with a sequence of formulae $\psi_1(\tuple x^1), \ldots ,\psi_n(\tuple x^n) \in \FO(\mathcal{C})$ such that for all structures $\A$ and all $\pt X=(X_1, \ldots ,X_n)$, where $X_i$ is a team with domain $\tuple x^i$,
\[\A\models_{\pt X} \phi(\tuple x^1, \ldots ,\tuple x^n) \Leftrightarrow \forall i=1, \ldots ,n: \A\models_{X_i} \psi_i(\tuple x^i).\]
Similarly, the statement holds vice versa.
\end{theorem}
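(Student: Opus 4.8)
The plan is to prove both directions by simultaneous induction on the structure of $\phi$. The key conceptual point is that for formulae built from uni-atoms alone, the only way different sorts interact is through the global disjunction $\lor$, which splits \emph{all} teams of the polyteam at once, and through the connectives $\land$, which impose conditions on all sorts. Since every atomic formula, every uni-atom, every local disjunction $\lor^j$, and every quantifier $\exists x$, $\forall x$ touches only the single sort of its variables (by the semantics in Definition~\ref{def:semantics}), the truth of $\phi$ should decompose sort-by-sort, modulo bookkeeping for how a global disjunction distributes a split over the sorts.

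First I would set up the induction. For the base cases (equalities, inequalities, relational atoms $R(\tuple x)$, their negations, and each uni-atom $A_Q(\tuple x_1,\dots,\tuple x_m)$ from $\mathcal C$), the variables involved all lie in a single sort, say sort $i$; I take $\psi_i$ to be that same atom and $\psi_k$ for $k \neq i$ to be a fixed true formula (e.g.\ $x = x$ for some $x \in \tuple x^k$, or $\top$). By Proposition~\ref{extension} and the semantics, $\calM \models_{\pt X} \phi \Leftrightarrow \calM \models_{X_i} \psi_i$, and the other $\psi_k$ are vacuously satisfied. For $\phi = \phi' \land \phi''$, by induction I obtain sequences $(\psi'_k)_k$ and $(\psi''_k)_k$; I set $\psi_k := \psi'_k \land \psi''_k$, and the claim follows since $\calM \models_{\pt X} \phi \Leftrightarrow (\forall k\, \calM\models_{X_k}\psi'_k)$ and $(\forall k\, \calM\models_{X_k}\psi''_k)$. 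For $\phi = \phi' \lor^j \phi''$, only $X_j$ is split; by induction I get $(\psi'_k)_k$ and $(\psi''_k)_k$, and I set $\psi_j := \psi'_j \lor \psi''_j$ (an ordinary split of the team $X_j$) and for $k \neq j$ I set $\psi_k$ to be, say, $\psi'_k \land \psi''_k$ — but one must check that the \emph{same} witnessing subteams can be used; since for $k\neq j$ the subteam is all of $X_k$ on both sides, $\psi'_k$ and $\psi''_k$ are each evaluated on $X_k$, so $\psi'_k\land\psi''_k$ is the right choice. For the quantifier cases $\exists x \phi'$ and $\forall x\phi'$ with $x \in \var{i}$, only $X_i$ is modified; by induction I get $(\psi'_k)_k$ and set $\psi_i := \exists x\,\psi'_i$ (resp.\ $\forall x\,\psi'_i$) and $\psi_k := \psi'_k$ for $k\neq i$; the equivalence is immediate since $\pt X[X_i[F/x]/X_i]$ changes only the $i$-th team, and by locality (Proposition~\ref{prop:locality}) the other $\psi_k$ are unaffected.

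The one genuinely delicate case is the global disjunction $\phi = \phi' \lor \phi''$, which simultaneously splits \emph{every} team: $\calM\models_{\pt X}\phi$ iff there exist $\pt Y, \pt Z \sub \pt X$ with $\pt Y \cup \pt Z = \pt X$ and $\calM\models_{\pt Y}\phi'$, $\calM\models_{\pt Z}\phi''$. By induction $\calM\models_{\pt Y}\phi' \Leftrightarrow \forall k\,\calM\models_{Y_k}\psi'_k$ and similarly for $\phi''$. So $\calM\models_{\pt X}\phi$ iff for each $k$ there is a split $X_k = Y_k\cup Z_k$ with $\calM\models_{Y_k}\psi'_k$ and $\calM\models_{Z_k}\psi''_k$ — but crucially, because the splits of distinct sorts are completely independent (nothing in $\phi'$ or $\phi''$ links sort $k$ to sort $k'$), the existence of a global split is equivalent to the existence of a split \emph{per sort}. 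Hence I set $\psi_k := \psi'_k \lor \psi''_k$. This decoupling is exactly where the hypothesis ``$\mathcal C$ consists of uni-atoms'' is used: a genuine poly-atom would forbid this factorisation. I expect this step to be the main obstacle, in that it requires carefully arguing that the per-sort witnesses can be assembled into a single global witness and conversely; the argument is a routine set-theoretic unfolding of the definition of subteam and union of polyteams, but it is the crux of the theorem.

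For the converse direction (``the statement holds vice versa''), given formulae $\psi_1(\tuple x^1),\dots,\psi_n(\tuple x^n) \in \FO(\mathcal C)$, I would observe that each $\psi_i$ is in particular a formula of $\PFO(\mathcal C)$ using only sort-$i$ variables, so by Proposition~\ref{extension} $\calM\models_{X_i}\psi_i \Leftrightarrow \calM\models_{\pt X}\psi_i$ for the polyteam $\pt X$; then $\phi := \psi_1 \land \dots \land \psi_n$ satisfies $\calM\models_{\pt X}\phi \Leftrightarrow \forall i\,\calM\models_{X_i}\psi_i$ by the semantics of $\land$, which is exactly the claimed equivalence. This direction is essentially immediate once locality and Proposition~\ref{extension} are in hand.
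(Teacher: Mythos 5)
Your proof is correct, and its skeleton is the same as the paper's: define the $\psi_i$ by a sort-projection recursion ($\top$ for atoms of the wrong sort, $\land$ commuting with the projection, $\lor^j$ becoming $\lor$ in sort $j$ and $\land$ elsewhere, quantifiers kept only in their own sort), and obtain the converse by taking $\phi:=\psi_1\land\dots\land\psi_n$. The one structural difference is your treatment of the global disjunction $\lor$: the paper first invokes Proposition~\ref{onedisjunction} to replace every $\lor$ by local disjunctions, so that case never arises in its induction, whereas you keep $\lor$ and handle it directly via the decoupling argument
\[
\exists (Y_k,Z_k)_k\ \forall k\, P_k(Y_k,Z_k)\ \Leftrightarrow\ \forall k\ \exists (Y_k,Z_k)\, P_k(Y_k,Z_k),
\]
which is valid because the induction hypothesis has already factored the truth of $\phi'$ and $\phi''$ sort-by-sort. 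Your route is self-contained and arguably cleaner here: Proposition~\ref{onedisjunction} is stated only for $\PFO$ (without the atoms $\mathcal C$), its proof introduces fresh quantified variables, and it is established only over structures of cardinality at least two, so the paper's shortcut carries some caveats that your direct case analysis avoids. The paper's approach buys a shorter induction with one fewer case. One small remark: you locate the use of the uni-atom hypothesis in the disjunction case, but it is equally essential in the base case --- a genuine poly-atom relates two teams at once and could not be assigned to a single $\psi_i$ --- so the hypothesis is doing work in both places.
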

\begin{proof}
The latter statement is clear as it suffices to set $\phi(\tuple x^1, \ldots ,\tuple x^n):= \psi_1(\tuple x^1)\wedge \ldots \wedge \psi_n(\tuple x^n)$. For the other direction, we define recursively functions $f_i$ that map formulae $\phi(\tuple x^1, \ldots ,\tuple x^n) \in \PFO(\mathcal{C})$ to formulae $\psi_i(\tuple x^i) \in \FO(\mathcal{C})$. By Proposition \ref{onedisjunction} we may assume that only disjunctions of type $\lor^i$, for some $i\in\mathbb{N}$, may occur in $\phi$. The functions $f_i$ are defined as follows:
\begin{itemize}
\item If $\phi(\tuple x^j)$ is an atom, then $f_i(\phi) \dfn \begin{cases}\phi &\textrm{if }i=j,\\\top&\textrm{otherwise.}\end{cases}$
\item $f_i(\psi \lor^j \theta) \dfn \begin{cases}f_i(\psi) \lor f_i(\theta) &\textrm{if }i=j,\\ f_i(\psi) \land f_i(\theta) &\textrm{otherwise.}\end{cases}$
\item $f_i(\psi \land \theta) \dfn f_i(\psi)\land f_i(\theta)$.
\item For $Q\in \{\exists,\forall\}$, set $f_i(Q x^j \psi)\dfn\begin{cases}Q x f_i(\psi) &\textrm{if }i=j,\\ f_i(\psi)  &\textrm{otherwise.}\end{cases}$
\end{itemize}

We set $\psi_i:=f_i(\phi)$ and show the claim by induction on the structure of the formula. The cases for atoms and conjunctions are trivial. We show the case for $\lor^i$.

Let $\phi=\psi\lor^j\theta$ and assume that the claim holds for $\psi$ and $\theta$. Now
\begin{align*}
\A \models_{\pt X} \phi \quad\text{ iff }\quad& \text{$\A \models_{\pt X[Y_j/X_j]}\psi$ and  $\A \models_{\pt X[Z_j/X_j]}\theta$,} \\
&\text{for some $Y_j,Z_j\subseteq X_j$ such that $Y_j\cup Z_j = X_j$}.
\end{align*}
By the induction hypothesis,  $\A \models_{\pt X[Y_j/X_j]}\psi$ and  $\A \models_{\pt X[Z_j/X_j]}\theta$ iff $\A \models_{Y_j} f_j(\psi)$, $\A \models_{Z_j} f_j(\theta)$, and
$\A \models_{X_i} f_i(\psi), \A \models_{X_i} f_i(\theta)$  for each $i\neq j$. Thus we obtain that $\A \models_{\pt X} \phi$ holds iff
\[
\A \models_{X_j} f_j(\psi)\lor f_j(\theta),  \text{ and } \A \models_{X_i} f_i(\psi) \text{ and } \A \models_{X_i} f_i(\theta)  \text{ for each } i\neq j.
\]
The above can be rewritten as
\[
\A \models_{X_j} f_j(\psi)\lor f_j(\theta),  \text{ and } \A \models_{X_i} f_i(\psi) \land f_i(\theta)  \text{ for each } i\neq j.
\]
The claim now follows, since  $f_j(\psi)\lor f_j(\theta)= f_j(\psi\lor^j \theta)$ and  $f_i(\psi)\land f_i(\theta)= f_i(\psi\lor^j \theta)$, for $i\neq j$. 

The cases for the quantifiers are similar. \qedd
\end{proof}

Theorem \ref{thm:conj} implies that poly-atoms which describe relations between two teams are beyond the scope of uni-logics. The following proposition illustrates this for $\PFO(\deps)$.

\begin{proposition} The poly-constancy atom $\pcon{x^1}{x^2}$  cannot be expressed in  $\PFO(\deps)$.
 \end{proposition}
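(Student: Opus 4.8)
The plan is to apply Theorem~\ref{thm:conj} directly. Since $\deps$ is a set of uni-atoms (each ordinary dependence atom $\dep{\tuple x,\tuple y}$ uses variables of a single sort), that theorem tells us that any $\phi(x^1,x^2)\in\PFO(\deps)$ collapses: there are $\psi_1(x^1),\psi_2(x^2)\in\FO(\deps)$ such that $\calA\models_{\pt X}\phi(x^1,x^2)$ iff $\calA\models_{X_1}\psi_1(x^1)$ and $\calA\models_{X_2}\psi_2(x^2)$, for every polyteam $\pt X=(X_1,X_2)$ with $\Dom{X_1}=\{x^1\}$ and $\Dom{X_2}=\{x^2\}$. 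So I would argue by contradiction: suppose $\pcon{x^1}{x^2}$ were equivalent to some such $\phi$, fix the witnessing $\psi_1,\psi_2$.

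The key step is then to exhibit two polyteams that are indistinguishable by the pair $(\psi_1,\psi_2)$ but separated by the poly-constancy atom. Concretely, work over a model $\calA$ with at least two elements $a\neq b$ in $A$. Consider the four one-element teams $X_1^a=\{x^1\mapsto a\}$, $X_1^b=\{x^1\mapsto b\}$ on sort $1$ and $X_2^a=\{x^2\mapsto a\}$, $X_2^b=\{x^2\mapsto b\}$ on sort $2$. Now $\pt X=(X_1^a,X_2^a)$ satisfies $\pcon{x^1}{x^2}$ (both constant tuples equal $a$), so by assumption $\calA\models_{X_1^a}\psi_1$ and $\calA\models_{X_2^a}\psi_2$. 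By an entirely symmetric argument using $\pt X'=(X_1^b,X_2^b)$ we get $\calA\models_{X_1^b}\psi_1$ and $\calA\models_{X_2^b}\psi_2$. But then the ``mixed'' polyteam $\pt Y=(X_1^a,X_2^b)$ satisfies both $\psi_1$ on $X_1^a$ and $\psi_2$ on $X_2^b$, hence $\calA\models_{\pt Y}\phi$, i.e.\ $\calA\models_{\pt Y}\pcon{x^1}{x^2}$. This is a contradiction, since in $\pt Y$ we have $s(x^1)=a\neq b=s'(x^2)$ for the two assignments $s\in X_1^a$, $s'\in X_2^b$, so the poly-constancy atom fails.

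I do not expect any real obstacle here; the content is essentially a ``rectangular closure'' observation about the decomposition given by Theorem~\ref{thm:conj}. The only small points to be careful about are: (i) making sure the domain restrictions on the teams match what Theorem~\ref{thm:conj} requires (teams with exactly the domains of the relevant variable tuples — here singletons $\{x^1\}$ and $\{x^2\}$), which is harmless since one may always restrict/extend teams via locality (Proposition~\ref{prop:locality}); and (ii) noting that $\pcon{x^1}{x^2}$ is indeed (equivalent to) a formula of $\PFO(\deps)$ in the hypothetical, which is exactly what we are assuming for contradiction. If one prefers to avoid invoking Theorem~\ref{thm:conj} as a black box, the same separation can be obtained directly by a straightforward structural induction showing every $\PFO(\deps)$-formula $\phi(x^1,x^2)$ satisfies the ``product'' property $\calA\models_{(X_1,X_2)}\phi \wedge \calA\models_{(X_1',X_2')}\phi \Rightarrow \calA\models_{(X_1,X_2')}\phi$, which $\pcon{x^1}{x^2}$ manifestly violates; but reusing Theorem~\ref{thm:conj} is cleaner.
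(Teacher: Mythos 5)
Your proposal is correct and follows essentially the same route as the paper: both invoke Theorem~\ref{thm:conj} to decompose a hypothetical defining formula into $\psi_1(x^1)$ and $\psi_2(x^2)$, then use two singleton polyteams with distinct constant values to show the ``mixed'' polyteam would also have to satisfy the atom, a contradiction. The only differences are cosmetic (your $a,b$ versus the paper's $0,1$, plus your optional remarks on locality and a direct inductive alternative).
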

\begin{proof}
Assume that  $\pcon{x^1}{x^2}$ can be defined by some $\phi(x^1,x^2)\in \PFO(\deps)$. By Theorem \ref{thm:conj}
there are $\FO(\deps)$-formulae $\psi_1( x^1)$ and $\psi_2( x^2)$ such that for all $\pt X=(X_1,X_2)$, where $X_i$ is a team with domain $ x^i$, it holds that 
\begin{equation}\label{polyconst}
\A\models_{\pt X} \pcon{x^1}{x^2} \Leftrightarrow \forall i=1,2 : \A\models_{X_i} \psi_i( x^i).
\end{equation}
Define teams $X_1:=\{x^1\mapsto 0\}$, $X_2:=\{x^2\mapsto 0\}$, $Y_1:=\{x^1\mapsto 1\}$, and $Y_2:=\{x^2\mapsto 1\}$. 
Now clearly $\A\models_{(X_1,X_2)} \pcon{x^1}{x^2}$, and $\A\models_{(Y_1,Y_2)} \pcon{x^1}{x^2}$.  Hence by \eqref{polyconst}, we obtain first that
\(    \A\models_{X_1} \psi_i( x^1) \textrm{ and  } \A\models_{Y_2} \psi_i( x^2) \), 
and then that  $\A\models_{(X_1,Y_2)} \pcon{x^1}{x^2}$, which is a contradiction.\qedd
\end{proof}

It is now easy to see that  Theorems \ref{thm:conj} and \ref{thm:earlier} together imply that  $\PFO(\inds)$ ($\PFO(\deps)$, resp.) captures all conjunctions of (downward closed, resp.) $\ESO$ properties of teams.
\begin{theorem}\label{thm:udepA}
Let $\phi(\tuple x^1, \ldots ,\tuple x^n)$ be a $\PFO(\inds)$ ($\PFO(\deps)$, resp.) formula where  $\vec{x}^i$ is a sequence of variables from $\var{i}$. Let $R_i$ be an $|\tuple x^i|$-ary relation symbol for $i=1, \ldots ,n$.
Then there are (downwards closed with respect to $R_i$, resp.) $\ESO$-sentences $\psi_1(R_1), \ldots ,\psi_n(R_n)$ such that for all polyteams $\pt X=(X_1, \ldots ,X_n)$ where $\Dom{X_i}=\tuple x^i$ and $X_i\neq \emptyset$
\begin{multline*}
\A \models_{\pt X} \phi(\tuple x^1, \ldots ,\tuple x^n) 
\Leftrightarrow  (\A, R_1:=\rel{X_1}, \ldots ,R_n:=\rel{X_n})\models \psi_1(R_1)\wedge \ldots \wedge \psi_n(R_n).
\end{multline*}
The same statement holds also vice versa.
\end{theorem}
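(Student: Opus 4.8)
The plan is to obtain this theorem by composing Theorem~\ref{thm:conj} with Theorem~\ref{thm:earlier}. Observe first that independence atoms and dependence atoms, once their variable sequences are of a single sort, are uni-atoms, so Theorem~\ref{thm:conj} applies with $\mathcal{C}=\{\inds\}$ and with $\mathcal{C}=\{\deps\}$. Thus, given $\phi(\tuple x^1,\ldots,\tuple x^n)\in\PFO(\inds)$ (resp.\ $\PFO(\deps)$), it yields $\FO(\inds)$- (resp.\ $\FO(\deps)$-) formulae $\psi'_1(\tuple x^1),\ldots,\psi'_n(\tuple x^n)$ such that
\[
\calM\models_{\pt X}\phi(\tuple x^1,\ldots,\tuple x^n)\Leftrightarrow \forall i\colon\calM\models_{X_i}\psi'_i(\tuple x^i)
\]
for every polyteam $\pt X=(X_1,\ldots,X_n)$ with $\Dom{X_i}=\tuple x^i$.

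Next I would feed each uni-formula $\psi'_i$ into Theorem~\ref{thm:earlier}: as $\psi'_i$ is an independence (resp.\ dependence) logic formula, there is an $\ESO$-sentence $\psi_i(R_i)$ --- downward closed with respect to $R_i$ in the dependence case --- with $\calM\models_{X_i}\psi'_i(\tuple x^i)\Leftrightarrow(\calM,R_i:=\rel{X_i})\models\psi_i(R_i)$ whenever $X_i\neq\emptyset$ and $\Dom{X_i}=\tuple x^i$. Chaining the two equivalences gives the forward implication of the theorem, once one records the elementary fact that, because the $R_i$ are pairwise distinct and $\psi_i$ contains no second-order relation symbol other than $R_i$, the expansion $(\calM,R_1:=\rel{X_1},\ldots,R_n:=\rel{X_n})$ satisfies $\psi_1(R_1)\wedge\cdots\wedge\psi_n(R_n)$ if and only if $(\calM,R_i:=\rel{X_i})\models\psi_i(R_i)$ for each $i$. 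The converse is symmetric: beginning with $\ESO$-sentences $\psi_i(R_i)$, the ``vice versa'' clause of Theorem~\ref{thm:earlier} supplies independence (resp.\ dependence) logic formulae $\psi'_i(\tuple x^i)$ capturing them, and then $\phi:=\psi'_1(\tuple x^1)\wedge\cdots\wedge\psi'_n(\tuple x^n)$ --- which is precisely the witness used in the ``vice versa'' part of Theorem~\ref{thm:conj} --- closes the loop.

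I do not expect a substantive obstacle; the only care required is bookkeeping. One renames the existential second-order quantifiers occurring inside the separate $\psi_i$ apart so that $\psi_1\wedge\cdots\wedge\psi_n$ is a well-formed $\ESO$-sentence over $\tau\cup\{R_1,\ldots,R_n\}$, and one checks that the downward-closure clause is inherited by conjunctions: shrinking $R_i$ preserves $\psi_i$ by hypothesis and leaves the remaining conjuncts untouched, so $\psi_1\wedge\cdots\wedge\psi_n$ is downward closed with respect to every $R_i$ in the dependence case. All the genuine content of the theorem is carried by Theorems~\ref{thm:conj} and~\ref{thm:earlier}.
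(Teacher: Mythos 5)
Your proposal is correct and is exactly the argument the paper intends: the paper states Theorem~\ref{thm:udepA} without a separate proof, remarking only that it is ``easy to see'' from Theorems~\ref{thm:conj} and~\ref{thm:earlier}, which is precisely the composition you carry out (including the routine bookkeeping about renaming second-order quantifiers and the downward closure of the conjunction). Nothing further is needed.
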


\subsection{Poly-dependencies in polyteam semantics}\label{sect:exp2}
Next we consider poly-dependencies in polyteam semantics. We begin by observing that many translations between different team logics carry over to polyteam logics.

\begin{lemma}
The following equivalences hold:
\begin{align}
\pdep{\tuple x^1}{\tuple y^1}{\tuple u^2}{\tuple v^2}\equiv& \hspace{2mm}\indep{\tuple x^1, \tuple u^2 / \tuple x^1}{\tuple y^1 / \tuple y^1}{\tuple v^2 / \tuple y^1},\label{e1}\\
\pdep{\tuple x^1}{y^1}{\tuple u^2}{v^2}\equiv& \hspace{2mm}\forall z^1(y^1 = z^1 \vee^1 \exc{\tuple x^1 z^1}{\tuple u^2 v^2}),\label{e2}\\
\tuple x^1\sub \tuple u^2 \equiv& \hspace{2mm} \indepc{\tuple x^1 / \tuple u^2}{\emptyset / \emptyset},\label{e3}\\
\tuple x^1\sub \tuple u^2 \equiv& \hspace{2mm} \forall \tuple v^2(\exc{\tuple x^1}{\tuple v^2} \vee^2 \tuple v^2\sub \tuple u^2),\label{e4}\\
\exc{\tuple x^1}{\tuple u^2} \equiv&\hspace{2mm} \exists y^1 z^1 v^2 w^2 (\pdep{\tuple x^1}{y^1z^1}{\tuple u^2}{v^2w^2} \label{e5}\\
& \quad \wedge y^1=z^1 \wedge v^2\neq w^2),\notag\\
\exc{\tuple x^1}{\tuple u^2} \equiv& \hspace{2mm}  \exists \tuple y^1(\tuple u^2 \sub \tuple y^1 \wedge \exc{\tuple x^1}{\tuple y^1}),\label{e6}
\end{align}
\begin{align}
\indep{\tuple x^2,\tuple x^3 / \tuple x^1}{\tuple y^2 / \tuple y^1}{\tuple z^3 / \tuple z^1}\equiv
&\hspace{2mm}\forall \tuple p^2\tuple q^2 \exists u^2 v^2\forall \tuple p^3\tuple q^3\tuple r^3 \exists u^3 v^3 \Big( \label{e8}\\
&\pdep{\tuple p^2\tuple q^2}{u^2v^2}{\tuple p^3\tuple q^3 }{u^3v^3} \notag\\
& \wedge \big(u^2 =v^2\vee^1(u^2\neq v^2 \wedge \exc{\tuple x^2 \tuple y^2}{\tuple p^2\tuple q^2})\big) \nonumber\\
& \wedge \big(u^3 \neq v^3 \vee^2 \exc{\tuple x^3 \tuple z^3}{\tuple p^3\tuple r^3} \vee^2 \tuple p^3\tuple q^3\tuple r^3 \sub \tuple x^1\tuple y^1\tuple z^1\big)\Big).\nonumber
\end{align}
\end{lemma}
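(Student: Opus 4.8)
The plan is to verify each of the equivalences \eqref{e1}--\eqref{e8} separately by unwinding the polyteam satisfaction clauses of Definitions~\ref{def:semantics}, the poly-atom semantics from Section~\ref{sec:polygen}, and the locality and downward-closure properties (Propositions~\ref{prop:locality} and the downward-closure proposition) where convenient. In each case I would fix a model $\A$ and a polyteam $\pt X$ with the relevant teams nonempty, spell out the left-hand side as a first-order condition on the relations $\rel{X_i,\tuple x^i}$, and then show the right-hand side unwinds to the same condition.

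For the ``easy'' half of the list I expect only routine bookkeeping. Equivalence \eqref{e1} is immediate: the poly-independence clause with $\tuple x^1$ appearing both as the shared coordinate on the left and as the witness coordinate, together with the repeated $\tuple y^1$ on the witnessing side, forces the witness $s''\in X_1$ to equal $s$ on $\tuple x^1\tuple y^1$, so the existence of $s''$ collapses to $s(\tuple y^1)=s'(\tuple v^2)$ whenever $s(\tuple x^1)=s'(\tuple u^2)$ --- exactly the poly-dependence condition. Equivalence \eqref{e3} is the observation that a poly-independence atom with empty right-hand components $\emptyset/\emptyset$ reduces to: for every $s\in X_1$ there is $s''\in X_2$ with $s''(\tuple u^2)=s(\tuple x^1)$, i.e.\ poly-inclusion. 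For \eqref{e2} and \eqml{e4} I would use the split induced by the local disjunction: in \eqref{e2}, the $\forall z^1$ introduces, for each $s\in X_1$, all values of $z^1$, the split on $\lor^1$ puts into the left disjunct exactly those assignments with $z^1=y^1$ and into the right disjunct the rest, so the right disjunct's exclusion atom $\exc{\tuple x^1 z^1}{\tuple u^2 v^2}$ says precisely that whenever $s(\tuple x^1)=s'(\tuple u^2)$ then $s(y^1)=s'(v^2)$ is forced; \eqref{e4} is the dual reformulation of poly-inclusion via exclusion, standard in the uni-setting. Equivalences \eqref{e5}, \eqref{e6} are the familiar encodings of exclusion by a constancy/dependence atom plus an inequality, and by an auxiliary included copy, respectively, now with sorts decorating the variables but with no genuinely new content.

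The serious case is \eqref{e8}, which expresses a genuinely three-sorted poly-independence atom using only binary poly-dependence atoms, poly-exclusion, poly-inclusion, and the two disjunctions. Here the strategy is: the universal quantifiers $\forall\tuple p^2\tuple q^2$ and $\forall\tuple p^3\tuple q^3\tuple r^3$ populate $X_2$ and $X_3$ with \emph{all} tuples over the domain, so the relations named by $\tuple p^2\tuple q^2$ and $\tuple p^3\tuple q^3\tuple r^3$ become full Cartesian products; the existential $u^2 v^2$ (resp.\ $u^3 v^3$) chosen by Skolem functions $F_2,F_3$ act as ``flags'' that the poly-dependence atom $\pdep{\tuple p^2\tuple q^2}{u^2v^2}{\tuple p^3\tuple q^3}{u^3v^3}$ synchronizes between sorts $2$ and $3$: $F_2$ is forced to depend only on $\tuple p^2\tuple q^2$ (it is a dependence in the uni-sense), $F_3$ only on $\tuple p^3\tuple q^3$, and agreement of the first coordinates forces $(u^2,v^2)=(u^3,v^3)$. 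Then the local disjunction $u^2=v^2\lor^1(\dots)$ over sort $1$ and $u^3\ne v^3\lor^2(\dots)$ over sort $2$ let one ``select'' exactly the subteam of sort-$2$ tuples that are \emph{not} of the form $(\tuple x^2,\tuple y^2)$ coming from $X_2$ and the sort-$3$ tuples that either are not a ``bad'' $(\tuple x^3,\tuple z^3,\cdot)$ pattern or whose projection lies in $\rel{X_1,\tuple x^1\tuple y^1\tuple z^1}$; unwinding these, one reads off exactly the condition ``for all $s\in X_2$, $s'\in X_3$ with $s(\tuple x^2)=s'(\tuple x^3)$ there is $s''\in X_1$ with $s''(\tuple x^1\tuple y^1)=s(\tuple x^2\tuple y^2)$ and $s''(\tuple x^1\tuple z^1)=s'(\tuple x^3\tuple z^3)$'', i.e.\ the poly-independence atom. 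The main obstacle --- and the part I would write out in full --- is checking that the flag variables $u^\bullet,v^\bullet$ can be chosen by the existential-quantifier functions so that the two local splits are simultaneously realizable; this requires a careful argument that the required Skolem functions exist on \emph{every} polyteam satisfying the left-hand side and, conversely, that their existence forces the independence condition, using locality to discard the auxiliary coordinates afterwards. I would handle both directions of \eqref{e8} by exhibiting the witnessing functions explicitly from the relations $\rel{X_1},\rel{X_2},\rel{X_3}$, and then invoke Proposition~\ref{prop:locality} to conclude that the free-variable content is unchanged.
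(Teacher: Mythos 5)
Your proposal is correct and follows essentially the same route as the paper, which simply declares \eqref{e1}--\eqref{e6} straightforward and notes that \eqref{e8} is the polyteam adaptation of the translation of independence atoms into inclusion/exclusion logic from \cite{galliani12}; your unwinding of the atom semantics for the easy cases and your Galliani-style argument for \eqref{e8} (universal quantifiers saturating the auxiliary coordinates, existentially chosen flag variables synchronised across sorts by the poly-dependence atom, and local splits selecting the relevant subteams) is exactly that intended argument, written out.
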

\begin{proof}
The equivalences \eqref{e1}--\eqref{e6} are straightforward and \eqref{e8}  is analogous to the corresponding  translation in the team semantics setting (see \cite{galliani12}).\qedd
\end{proof}

The following theorem compares the expressive powers of different polyteam logics. Observe that the expressivity of the logics with two poly-dependency atoms remains the same even if either one of the atoms has the standard team semantics interpretation. 
\begin{theorem}\label{cor:rel} The following equivalences of logic hold:
\begin{enumerate}[(1)]
\item $\PFO(\pdeps) \equiv \PFO(\pexcs)$,
\item $\PFO(\pinds) \equiv \PFO(\pexcs,\incs) \equiv \PFO(\pincs,\excs) \equiv \PFO(\pdeps,\incs)$\\ $\equiv \PFO(\pincs,\deps) \equiv \PFO(\pdeps,\inds) \equiv \PFO(\pexcs,\inds) \equiv \PFO(\pincs,\inds)$.
\end{enumerate}
\end{theorem}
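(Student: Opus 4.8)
The plan is to establish the two displayed chains of logical equivalences by combining the translations of the previous Lemma with the known expressivity results for the uni-relational logics. The overall strategy is a sandwich argument: for each logic $\mathcal L$ appearing in the chains I want to show $\PFO(\pdeps)\leq\mathcal L\leq\PFO(\pinds)$ (for the first chain, with equality throughout because $\PFO(\pdeps)\equiv\PFO(\pexcs)$) or $\PFO(\pinds)\leq\mathcal L\leq\PFO(\pinds)$ (for the second chain), where $\leq$ denotes "translatable into", so that all the logics in the second chain collapse to a single equivalence class containing $\PFO(\pinds)$.

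First I would prove (1). The inclusion $\PFO(\pdeps)\leq\PFO(\pexcs)$ is immediate from equivalence \eqref{e5} of the Lemma, which expresses a binary poly-dependence atom $\pdep{\tuple x^i}{y^i}{\tuple u^j}{v^j}$ (with a single variable on the right) using poly-exclusion together with $\PFO$ machinery; for longer right-hand sides one decomposes the atom coordinatewise using the Union rule, and for the case $i=j$ one falls back on Remark~\ref{uniremark}. Conversely $\PFO(\pexcs)\leq\PFO(\pdeps)$ follows from \eqref{e2}, which writes a poly-exclusion atom using a poly-dependence atom and $\PFO$ connectives. Hence $\PFO(\pdeps)\equiv\PFO(\pexcs)$. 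For (2), I would show that each listed logic sits between $\PFO(\pinds)$ and $\PFO(\pinds)$. The upper bound $\mathcal L\leq\PFO(\pinds)$ holds because every poly-atom in each of these logics is translatable into poly-independence: poly-dependence via \eqref{e1}, poly-exclusion via \eqref{e1} composed with \eqref{e5} (or directly), poly-inclusion via \eqref{e3}, and the uni-atoms $\incs,\excs,\deps,\inds$ are handled by Remark~\ref{uniremark} / Proposition~\ref{remarkprop}, which say that single-sort $\ESO$-definable atoms are already expressible in $\PFO(\inds)$, hence a fortiori in $\PFO(\pinds)$. The lower bound $\PFO(\pinds)\leq\mathcal L$ is the content of equivalence \eqref{e8}: a general poly-independence atom is expressed using a poly-dependence atom together with poly-exclusion, poly-inclusion, and local disjunctions. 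One then checks that every logic in the list contains enough of these primitives — e.g. $\PFO(\pexcs,\incs)$ has poly-exclusion and poly-inclusion and, since poly-dependence is expressible from poly-exclusion by \eqref{e5}, it has all three ingredients of \eqref{e8}; $\PFO(\pdeps,\incs)$ has poly-dependence and poly-inclusion, and gets poly-exclusion from \eqref{e5}; and so on through the list.

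So the bulk of the proof is bookkeeping: for each of the eight logics in the second chain, verify that the three building blocks needed on the right-hand side of \eqref{e8} (a poly-dependence atom, poly-exclusion, poly-inclusion) are all available, invoking \eqref{e1}--\eqref{e6} to convert between poly-dependence and poly-exclusion as needed, and using Remark~\ref{uniremark} to promote the uni-variants to their poly-counterparts restricted to a single sort — though for \eqref{e8} one genuinely needs atoms spanning three sorts, so the uni-atoms alone never suffice, which is why each logic in the list must already contain at least one genuinely poly atom. I would organise this as: (i) fix the "core" logic $\PFO(\pdeps,\pincs)$, show it is $\equiv\PFO(\pinds)$ using \eqref{e8} in one direction and \eqref{e1},\eqref{e3} in the other; (ii) show $\PFO(\pexcs,\incs)\equiv\PFO(\pincs,\excs)\equiv\PFO(\pdeps,\pincs)$ using \eqref{e2},\eqref{e5},\eqref{e6}; (iii) absorb the remaining logics with uni-atoms by noting that e.g. $\incs$ and $\pincs$ coincide in team semantics but here $\incs\leq\pincs$ trivially and $\PFO(\pexcs,\incs)$ already has poly-exclusion, from which poly-inclusion follows via \eqref{e4} (which writes poly-inclusion using poly-exclusion and uni-inclusion) — this last point is the neat observation that makes the long chain collapse.

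The main obstacle I expect is the careful handling of the sort indices and variable lengths in the translations, particularly in \eqref{e8}: one must make sure the fresh variables $\tuple p,\tuple q,\tuple r,u,v$ are introduced in the correct sorts, that the local disjunctions $\vee^1,\vee^2$ split the correct component teams, and that the quantifier prefix $\forall\exists\forall\exists$ correctly encodes the alternation of the independence condition across three sorts. A secondary subtlety is that several of the equivalences \eqref{e1}--\eqref{e6} are stated only for binary atoms or for the two-sort case $1,2$; applying them inside \eqref{e8} requires the implicit Union-style decomposition of longer tuples and a coordinatewise argument, together with a check that these translations compose without variable clashes. Once the sort-bookkeeping is set up, each individual equivalence is routine given the Lemma, so I would relegate the verifications to a short paragraph and emphasise only the sandwich structure and the role of \eqref{e4}–\eqref{e5} in collapsing the chain.
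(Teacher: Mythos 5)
Your proposal follows essentially the same route as the paper: both directions of every equivalence are obtained by chaining the translations \eqref{e1}--\eqref{e8}, with \eqref{e8} (plus \eqref{e2} and \eqref{e4}) giving the key lower bound $\PFO(\pinds)\leq\PFO(\pexcs,\incs)$ and \eqref{e1}, \eqref{e3}, \eqref{e5} giving the upper bounds back into $\PFO(\pinds)$, exactly as in the paper's chain of containments. One small slip worth fixing: in item (1) you have the equation labels reversed --- \eqref{e2} is the equivalence expressing a poly-dependence atom via poly-exclusion (yielding $\PFO(\pdeps)\leq\PFO(\pexcs)$), while \eqref{e5} expresses poly-exclusion via poly-dependence (yielding the converse); the mathematical content you attribute to each direction is nonetheless correct.
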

\begin{proof}
Item (1) follows by equivalences \eqref{e2} and \eqref{e5}. Item (2) follows from the following list of relationships:
\begin{itemize}
\item $\PFO(\pinds)\sub \PFO(\pexcs,\incs)$ by \eqref{e2}, \eqref{e4}, and \eqref{e8}.
\item $\PFO(\pexcs,\incs)  \equiv \PFO(\pincs,\excs)$ by \eqref{e4} and \eqref{e6}.
\item $\PFO(\pexcs,\incs) \equiv  \PFO(\pdeps,\incs)$ by \eqref{e2} and \eqref{e5}.
\item $\PFO(\pincs,\excs) \equiv \PFO(\pincs,\deps)$, since exclusion (dependence, resp.) atoms can be described in $\FO(\deps)$ ($\FO(\excs)$, resp.) \cite{galliani12}.
\item $\PFO(\pdeps,\incs)\sub \PFO(\pdeps,\inds)$, $\PFO(\pexcs,\incs)\sub \PFO(\pexcs,\inds)$, and $\PFO(\pincs,\deps)\sub \PFO(\pincs,\inds)$ since inclusion atoms can be described in $\FO(\inds)$ \cite{galliani12} and dependence atoms by independence atoms \cite{gradel13}.
\item $\PFO(\pdeps,\inds)\sub \PFO(\inds)$, $\PFO(\pexcs,\inds)\sub \PFO(\inds)$, and $\PFO(\pincs,\inds)$\linebreak$\sub \PFO(\pinds)$ by \eqref{e1}, \eqref{e3}, and \eqref{e5}.
\end{itemize}\qedd
\end{proof}

Next we show the analogue of Theorem \ref{thm:earlier} for polyteams.

\begin{theorem}\label{thm:pindA}
 Let $\phi(R_1, \ldots ,R_n)$ be an $\ESO$-sentence. There is a $\PFO(\pdeps,\incs)$-formula $\phi^*(\tuple x^1, \ldots ,\tuple x^n)$, where $|\tuple x^i|=\ar{R_i}$, such that for all structures $\A$ and all polyteams $\pt X=(X_1, \ldots ,X_n)$ with $\Dom{X_i}=\tuple x^i$ and $X_i\neq \emptyset$,
\begin{equation*}
\A \models_{\pt X} \phi^*(\tuple x^1, \ldots ,\tuple x^n) \Leftrightarrow  (\A, R_1:=\rel{X_1}, \ldots ,R_n:=\rel{X_n})\models \phi(R_1, \ldots ,R_n).
\end{equation*}
The statement holds also vice versa.
\end{theorem}
\begin{proof}
Considering first the direction from $\PFO(\pdeps,\incs)$ to $\ESO$, let $\phi(\tuple x^1, \ldots ,\tuple x^n) $ be a $ \PFO(\pdeps,\incs)$-formula. Since poly-dependence and uni-inclusion atoms are $\ESO$-definable, $\phi$ can be represented by some $\FO(\inds)$-formula $\phi^*$ (Theorem \ref{cor:polytouni}), which in turn can be expressed as some $\ESO$-formula $\psi(R)$ (Theorem \ref{thm:earlier}). Let $R_1, \ldots ,R_n$ be fresh relation symbols with respective arities $|\tuple x^1|, \ldots ,|\tuple x^n|$. Let $\psi'$ be obtained from $\psi$ by replacing each atom $R(\tuple v_1,\ldots ,\tuple v_n)$, where $\tuple v_1,\ldots ,\tuple v_n$ are tuples of variables respective lengths $|\tuple x^1|, \ldots ,|\tuple x^n|$,
 with the conjunction $R_1(\tuple v_1)\wedge \ldots \wedge R_n(\tuple v_n)$. Then we observe that for all models $\A$ and  polyteams $\pt X=(X_1, \ldots ,X_n)$ represented by $X$,
\[\A\models_{\pt X} \phi \iff \A\models_X \phi^*\iff (\A,\rel{X})\models \psi \iff (\A,\rel{X_1},\ldots ,\rel{X_n})\models \psi'.\]

 Consider then the opposite direction. Analogously to \cite{galliani12}, we can rewrite $\phi(R_1, \ldots ,R_n)$ as
\[\exists \tuple f \forall \tuple u \big( \bigwedge_{i=1}^{n} (R_i(\tuple u_i) \leftrightarrow f_{2i-1}(\tuple u_i)=f_{2i}(\tuple u_i))\wedge \psi(\tuple u, \tuple f)\big)\]
where $\tuple f=f_1, \ldots ,f_{2n}, \ldots ,f_m$ is a list of function variables, $\psi$ is a quantifier-free formula in which no $R_i$ appears, each $\tuple u_i$ is a subsequence of $\tuple u$, and each $f_i$ occurs only as $f_i(\tuple u_{j_i})$ for some fixed tuple $\tuple u_{j_i}$ of variables. For instance, $j_i= i/2$ for even $i\leq 2n$.

Let  $\tuple b^i$ be sequences of variables of sort $i$ such that $ |\tuple b^i|=|\tuple u_i| $, and let  $\tuple u^1 \tuple y^1$ be a sequence of  variables of sort $1$ such that  $\tuple u^1$ is a  copy of $\tuple u$ and $\tuple y^1=y^1_1, \ldots, y^1_m$.  We define $\phi^*(\tuple x^1, \ldots ,\tuple x^n)$ as the formula
\[  \forall \tuple b^1\exists z^1_0z^1_1\ldots \forall \tuple b^n\exists z^n_0z^n_1 \forall \tuple u^1\exists \tuple y^1\big( \theta_0 \wedge \theta_1\wedge \psi'(\tuple u^1, \tuple y^1) )
\]
where
\begin{align*}
\theta_0 := &\bigwedge_{i=1}^n \dep{\tuple b^i,z^i_0} \wedge \dep{\tuple b^i,z^i_1} \wedge ((\tuple b^i \sub \tuple x^i\wedge z^i_0 = z^i_1) \lor^i  (\tuple x^i \mid \tuple b^i \wedge z^i_0\neq z^i_1)),\\
\theta_1:=&  \bigwedge_{i=1}^n\pdep{\tuple u^1_i}{y^1_{2i-1}}{\tuple b^i}{z^i_0} \wedge \pdep{\tuple u^1_i}{y^1_{2i}}{\tuple b^i}{z^i_1} \wedge \bigwedge_{i=n+1}^m \dep{\tuple u^1_{j_i},y^1_i},
\end{align*}
and $\psi'(\tuple u^1, \tuple y^1)$ is obtained from $\psi(\tuple u, \tuple f)$ by replacing $\tuple u$ pointwise with $\tuple u^1$ and each $f_i(\tuple u_{j_i})$ with $y^1_i$.  Above, $\theta_0$ amounts to the description of the characteristic functions $f_{2i-1}$ and $f_{2i}$. 
We refer the reader to \cite{galliani12} to check that $\A \models_{\pt X} \theta_0$ iff for all $i$ the functions $s(\tuple b^i)\mapsto s(z^i_0)$ and $ s(\tuple b^i)\mapsto s(z^i_1)$ determined by the assignments $s\in X_i$ agree on $s(\tuple b^i)$ exactly when $s(\tuple b^i)\in \rel{X_i}$. The poly-dependence atoms in $\theta_1$ then transfer these functions over to the first team, and the dependence atoms in $\psi_1$ describe the remaining functions. 
As in \cite{galliani12}, it can now be seen that $\phi^*$ correctly simulates $\phi$. Since exclusion atoms can be expressed in dependence logic, the claim then follows.\qedd
\end{proof}
By item (2) of Theorem \ref{cor:rel} the result of Theorem \ref{thm:pindA}   extends to a number of other logics as well. For instance, we obtain that poly-independence logic captures all $\ESO$ properties of polyteams.
The proof of Theorem \ref{thm:pindA} can be now  easily adapted to show that poly-exclusion and poly-dependence logic  capture all downwards closed $\ESO$ properties of polyteams.

\begin{theorem}\label{thm:pdepA}
 Let $\phi(R_1, \ldots ,R_n)$ be an $\ESO$-sentence that is downwards closed with respect to $R_i$. Then there is a $\PFO(\pdeps)$-formula $\phi^*(\tuple x^1, \ldots ,\tuple x^n)$, where $|\tuple x^i|=\ar{R_i}$, such that for all structures $\A$ and all polyteams $\pt X=(X_1, \ldots ,X_n)$ with $\Dom{X_i}=\tuple x^i$ and $X_i\neq \emptyset$,
\begin{equation*}
\A \models_{\pt X} \phi^*(\tuple x^1, \ldots ,\tuple x^n) \Leftrightarrow  (\A, R_1:=\rel{X_1}, \ldots ,R_n:=\rel{X_n})\models \phi(R_1, \ldots ,R_n).
\end{equation*}
The statement holds also vice versa.
\end{theorem}
\begin{proof}
Consider first the direction from $\PFO(\pdeps)$ to $\ESO$. By Theorem \ref{cor:rel}, $\PFO(\pdeps)$ is subsumed by $\PFO(\pinds)$, and thus the previous theorem yields a suitable $\ESO$-sentence $\phi(R_1, \ldots ,R_n)$. Since $\PFO(\pdeps)$ is downwards closed (Proposition \ref{prop:closure}), this sentence is also downwards closed with respect to $R_i$.

For the other direction, let  $\phi(R_1, \ldots ,R_n)$ be an $\ESO$-sentence in which the relations $R_i$ appear only negatively. As in the proof of Theorem \ref{thm:pindA} and by downward closure we may transform it to an equivalent form (see \cite{kontinenv09} for details)
 \[\exists \tuple f \forall \tuple u \big( \bigwedge_{i=1}^{n} (\neg R_i(\tuple u_i) \vee f_{2i-1}(\tuple u_i)=f_{2i}(\tuple u_i))\wedge \psi(\tuple u, \tuple f)\big)\]
Now the  translation $\phi(\tuple x^1, \ldots ,\tuple x^n)$ is defined analogously to the proof of Theorem \ref{thm:pindA} except for $\theta_0$ which is redefined as
\[ \theta_0 := \bigwedge_{i=1}^n \dep{\tuple b^i,z^i_0} \wedge \dep{\tuple b^i,z^i_1} \wedge (\tuple x^i \mid \tuple b^i \lor^i  z^i_0= z^i_1).\]
Finally the claim follows by eliminating the exclusion atoms  from $\theta_0$.  \qedd
\end{proof}

Next we turn to poly-inclusion logic. Over sentences, inclusion logic is known to be as expressive as positive greatest fixed point logic ($\posgfp$), the fragment of  greatest fixed point logic in which all fixed point operators occur within the scope of an even number of negations. Moreover, all team properties definable in inclusion logic are also definable in $\posgfp$, but the converse does not hold due to union closure.

\begin{theorem}[\cite{gallhella13}]\label{galhella}
Every  $\FO(\incs)$-sentence is equivalent to some $\posgfp$-sentence, and vice versa. Moreover, for every $\FO(\incs)$-formula  $\phi(\tuple x)$ there is a $\posgfp$-sentence $\psi(R)$, where $|\tuple x|=\ar{R}$,  such that  for all structures $\A$ and all teams $X$ with $\Dom{X}=\tuple x$,
\[\A\models_X \phi(\tuple x)  \iff (\A, R:=\rel{X})\models \psi(R).\]
\end{theorem}
To generalise these results to polyteam semantics, we use the fact that
 all team connectives and quantifiers distribute over classical disjunctions.
\begin{proposition}[\cite{Galliani16}]\label{prop:neg}
Let $\phi$ be an $\FO(\calC,\cvees)$-formula, where $\calC$ is any set of atoms. Then $\phi$ is equivalent to some formula of the form $\psi_1\cvee \ldots \cvee\psi_n$ where $\psi_1,\ldots ,\psi_n$ are $\FO(\calC)$-formulae.
\end{proposition}

\begin{theorem}\label{thm:gfp}
Every  $\PFO(\pincs)$-sentence is equivalent to some $\posgfp$-sentence, and vice versa. Moreover, for every $\PFO(\pincs)$-formula  $\phi(\tuple x^1, \ldots ,\tuple x^n)$ there is a $\posgfp$-sentence $\psi(R_1, \ldots ,R_n)$, where $|\tuple x^i|=\ar{R_i}$,  such that  for all structures $\A$ and all polyteams $\pt X=(X_1, \ldots ,X_n)$ with $\Dom{X_i}=\tuple x^i$,
\[\A\models_{\pt X} \phi(\tuple x^1, \ldots ,\tuple x^n)  \iff (\A, R_1:=\rel{X_1}, \ldots ,R_n:=\rel{X_n})\models \psi(R_1, \ldots ,R_n).\]
\end{theorem}

\begin{proof}
Let $\phi(\tuple x^1, \ldots ,\tuple x^n)\in \PFO(\pincs)$ be a formula, and let $\phi^*(\tuple x^1, \ldots ,\tuple x^n) \in \FO(\incs,\cvees)$ be its team representation, obtained by  Corollary \ref{corinc}, in which two additional constants $0$ and $1$ occur. Without loss of generality, we may restrict our attention to structures with at least two elements. By Proposition \ref{prop:neg} $\phi^*$ is equivalent to a disjunction $\psi_1\cvee \ldots \cvee\psi_n$, where $\psi_i$ are $\FO(\incs)$-formulae.
By Theorem \ref{galhella} each $\psi_i$ is  equivalent to some $\posgfp$-sentence $\Phi_i(R)$. Define 
\[
\Phi(R):=\exists yz\Big(y\neq z\wedge \big(\Phi_1(y/0, z/1)\vee\ldots \vee \Phi_n(y/0, z/1)\big)\Big),
\]
 where $\Phi_i(y/0, z/1)$ are obtained from $\Phi_i$ by substituting $y$ and $z$ respectively for $0$ and $1$. 
 Let $R_1, \ldots ,R_n$ be fresh relation symbols with respective arities $|\tuple x^1|, \ldots ,|\tuple x^n|$.  Let $\Phi'$ be the formula obtained from $\Phi$ by replacing each atom $R(\tuple y_1,\ldots ,\tuple y_n)$, where $\tuple y_1,\ldots ,\tuple y_n$ are tuples of variables with respective lengths $|\tuple x^1|, \ldots ,|\tuple x^n|$, with the conjunction $R_1(\tuple y_1)\wedge \ldots \wedge R_n(\tuple y_n)$.
 The following equivalence holds for all structures $\A$, with at least two elements, and strictly non-empty polyteams $\pt X=(X_1, \ldots ,X_n)$. Let $X$ denote the team representation of $\pt X$ obtained by taking the Cartesian product of the teams $X_i$, $1\leq i \leq n$. 
\begin{align*}
\A\models_{\pt X} \phi& \,\Leftrightarrow\, \A^*\models_X  \psi_1\cvee \ldots \cvee \psi_n
\\
&\,\Leftrightarrow\, (\A,R:=\rel{X})\models \Phi 
\\ 
& \,\Leftrightarrow\, \big(\A,R_1 \dfn \rel{X_1},\ldots , R_n \dfn \rel{X_n}\big)\models \Phi',
\end{align*}
where $\A^*$ is an expansion of $\A$ with two distinct constants $0$ and $1$. The converse direction for $\posgfp$-sentences follows by Theorem \ref{galhella} and since $\FO(\incs)$ is a fragment of $\PFO(\pincs)$.\qedd
\end{proof}

\section{Conclusion}
In this article we have laid the foundations of polyteam semantics in order to facilitate the fruitful exchange of ideas and results between team semantics and database theory. Our results show that many of the familiar properties and results from team semantics carry over to the polyteam setting. In particular, we identified a natural polyteam analogue of dependence atoms and gave a complete axiomatisation for the associated implication problem. We also showed that polyteam semantics can sometimes be reduced to team semantics, although it can be questioned whether such an interpretation is reasonable in the first place. The examples of this paper demonstrate that polyteam semantics is a conceptually more natural framework for capturing properties of sets of relations. Specifications for multiple relations are easier to parse if different relations are explicitly distinguished in formulae. Also, if polyteam logics are interpreted as data constraint languages, as in Example \ref{ex:2}, then the reduction from polyteam semantics to team semantics incurs an unnecessary computational overhead.
 First, a single team to represent the polyteam has to be constructed, e.g., by taking a Cartesian product of all coordinate teams; and second, this team has to be validated against a team logic formula which is much larger in size than the initial polyteam formula.

Our results also open up interesting avenues for further research. One question is to determine whether poly-dependence logic reduces to dependence logic. Our methods only work for poly-independence and poly-inclusion logic, and the proviso in the latter case was to include classical disjunction. As inclusion logic with classical disjunction is not union closed, it would also be interesting to study the team properties definable in this logic. Apart from poly-dependence atoms, we did not consider axioms for any other poly-atoms. Since the axioms of poly-inclusion atoms are already known from database theory, a natural next step  would be to axiomatise marginal poly-independence atoms. It would also be interesting to develop axiomatic methods for more expressive fragments of polyteam logics (cf. \cite{DBLP:journals/apal/Hannula15,DBLP:journals/apal/KontinenV13,Yang19}).

%%
%% Bibliography
%%

\bibliography{biblio,multisets}
\bibliographystyle{plain}

\end{document}